\documentclass[lettersize,journal]{IEEEtran}
\IEEEoverridecommandlockouts
\ifCLASSINFOpdf
\else
\fi

\usepackage{tabularx}
\usepackage{threeparttable}
\usepackage{mathrsfs}
\usepackage{color}
\usepackage{amsfonts,amssymb}
\usepackage{amsfonts}
\usepackage{subfigure}
\usepackage{booktabs}

\usepackage{amsmath}
\usepackage{enumerate}
\usepackage{algorithm}
\usepackage{algorithmic}
\usepackage{bm}
\usepackage{makecell}
\usepackage{multirow}
\usepackage{mathtools}
\usepackage{bbm}
\usepackage{dsfont}
\usepackage{pifont}
\usepackage{amsthm}
\usepackage{cite}
\usepackage{balance}

\usepackage{siunitx}

\usepackage{tikz}
\usetikzlibrary{arrows.meta, positioning, shapes, calc}

\newtheorem{remark}{Remark}
\newtheorem{theorem}{Theorem}
\newtheorem{lemma}{Lemma}

\newtheorem{assumption}{Assumption}

\begin{document}

\title{Preserving Topology Privacy of Network Systems by Feedback: Conditions and Distributed Design}
\author{Yushan Li, Jiabao He, Julien M. Hendrickx, and Dimos V. Dimarogonas
\thanks{This work was supported by the Knut and Alice Wallenberg (KAW) Foundation, the Swedish Research Council (VR), and SIDDARTA concerted research action of the fédération Wallonie-Bruxelles.}
\thanks{Yushan Li, Jiabao He and Dimos V. Dimarogonas are with the Department of Decision and Control Systems, KTH Royal Institute of Technology, Stockholm, Sweden. Email: \{yushanl,jiabaoh,dimos\}@kth.se.} 
\thanks{Julien M. Hendrickx is with ICTEAM institute, UCLouvain, B-1348 Louvain-la-Neuve, Belgium. Email: julien.hendrickx@uclouvain.be.}
}

\maketitle

\begin{abstract}
This paper develops a feedback-based method to preserve the topology privacy of consensus protocols in network systems. 
The key idea is to intentionally violate topology identifiability conditions, thereby preventing unique or accurate recovery of the true topology from available observations, while preserving the intended consensus behavior. 
This problem is challenging because the feedback magnitude directly reflects the privacy level of edges, while it is strongly coupled with the consensus convergence and constrained by local communications at each node. 
To begin with, we derive the feedback conditions of both partial and full observation cases, 
where the topology unsolvability from observation data is characterized in the former, and the solution space that enforces topology inaccuracy from data is constructed in the latter. 
Then, we propose a novel distributed topology modification design under limited privacy budgets, and establish the performance guarantees through a controllable tradeoff between the consensus deviation and the topology privacy. 
Finally, we develop a low-complexity heuristic algorithm to achieve optimal privacy preservation on existing edges. 
Comparative simulations validate the effectiveness and outperformance of the proposed preservation design. 
\end{abstract}

\begin{IEEEkeywords}
Topology privacy, privacy preservation, topology identification, consensus networks, distributed design.
\end{IEEEkeywords}

\IEEEpeerreviewmaketitle

\section{Introduction}

In recent years, consensus protocols have become a core mechanism in numerous network systems (NSs) to fuse information, coordinate actions, and achieve global agreement, including sensor networks, robotic teams, and smart infrastructure \cite{olfati2007consensus}. 
In such consensus networks, the topology determines how state information is exchanged over the underlying communication links, and plays a central role in ensuring stability, robustness, and convergence. 
Hence, the topology structure is very sensitive information of NSs and its privacy needs to be well preserved.

\subsection{Motivations}
While the topology structure is fundamental to distributed coordination of consensus networks, 
the communication nature and the physical openness of these systems introduce a significant topology privacy vulnerability: an external eavesdropper or observer can collect the state trajectories by comprising the communication channels or measuring the operation situation of the physical plants. 
As demonstrated in network inference/identification studies \cite{brugere2018network,shvydun2024system}, these collected trajectories can enable external observers to recover the interaction topology, leading to severe privacy leakage on sensitive aspects of the communication infrastructure, such as network connectivity, routing structure, and interaction strengths \cite{hou2021combating,10210275}. 
For example, in wireless sensor network, one can use external radio-frequency sensor equipped with source separation and causal analysis to reconstruct directed communication links from overheard over-the-air traffic patterns \cite{9249412}. 
In robotic networks, an external observer can measure the motion trajectories of the robots to infer their interaction relationships \cite{sebastian2023learning}. 
To mitigate the privacy risk, this work aims to develop a topology privacy-preserving method that prevents external observers from recovering the true topology from available data, while preserving desired coordination performances.

\subsection{Related Works}

Topology inference, also referred to as network identification or reconstruction, has attracted extensive attention across various research communities. 
Existing approaches can broadly be grouped according to whether the underlying network dynamics are stochastic or deterministic. For stochastic network systems, a rich set of tools has been developed, including graph signal processing techniques \cite{mateos2019connecting,10146241}, causality-based \cite{9695344,li2024topology}, and vector autoregressive methods \cite{ioannidis2019semi,zaman2021online}. 
These methods typically rely on long horizon measurements and focus on the  asymptotic inference performance. 
In contrast, topology inference for deterministic dynamical networks often centers on understanding \emph{identifiability}, namely, determining under what conditions the network structure can be uniquely recovered from available data or excitation and measurement allocations \cite{hendrickx2019identifiability,cheng2022allocation,sun2024identifiability}. 
Representative efforts include node-perturbation or knock-out based strategies for undirected networks \cite{nabi-abdolyousefi2012networka}, as well as edge-level agreement conditions for nonlinear systems~\cite{10337619}. 
Identifiability criteria for general heterogeneous linear networks have been established in \cite{vanwaarde2021topology}. 

Compared with the extensive literature on topology inference, the problem of designing mechanisms that can resist such inference has received far less attention. 
Some straightforward ideas are to adopt random perturbations or encryption-based techniques in the consensus process as in \cite{mo2017privacy,nozari2017differentially,he2020differential,10261255}. 
These works achieved advanced progress on preserving the state privacy of the system \cite{wang2021dynamic,9999003,10360869}, but they rarely considered the topology level. 
The authors in \cite{katewa2020differential,Hawkins2024node} developed differential privacy based approaches to intentionally corrupt the broadcasted signals with noises, which were shown to protect the spectral or aggregate characteristics of the topology. 
The work \cite{li2025Preserving} further designed a noise-adding mechanism to preserve the topology matrix itself while maintaining state convergence in the mean-square sense. 
Nonetheless, these methods rely on persistent randomness injection, inherently creating a trade-off between the topology privacy preservation and convergence performance. 
In particular, if the continuous noise injection is used in the communication channel, it could increase signaling overhead and degrade link quality. 

Alternatively, the system identifiability analysis for deterministic networks can provide another perspective for the privacy preservation design by enforcing the topology unidentifiable, without relying on persistent randomization. 
For instance, \cite{mao2025unidentifiability} introduced a novel unidentifiability criterion based on the Fisher information matrix and proposed a low-rank controller to enforce this property in a single linear system, at the cost of nominal performance (e.g., slower or biased state convergence). 
The recent works \cite{hao2024discernibility,fan2024output} gave explicit conditions under which a change in the network topology can be distinguished from available measurement trajectories. 
However, how to use these conditions to intentionally preserve the topology privacy still remains an open issue. 

\subsection{Contributions}

In this work, we develop a feedback-based topology privacy preservation method, which prevents external observers from recovering the true topology from the available data, while preserving the intended consensus functionality. 
Specifically, the observer has no prior knowledge about the feedback design and the network structure. 
Building on the identifiability notion of dynamic networks in the state-space model \cite{vanwaarde2021topology}, 
the topology privacy is achieved by violating the identifiability conditions. 
Consequently, either multiple distinct topologies are consistent with the same data, or the topology identifiable from the data is biased away from the true one. 
The problem is challenging because 
i) the strong coupling between the topology and consensus convergence makes the feasible solution space of the introduced feedback difficult to characterize, 
and ii) the local communication constraints at each node further restrict the design. 
The key novelty lies in employing a locally implementable feedback modification without disrupting convergence or changing the communication pattern, while providing explicit privacy and performance guarantees. 

Preliminary results of analyzing the conditions of preserving the original consensus value and a Laplacian-based design have been presented in \cite{yushan2025Resistant}. 
This paper extends the idea to the general partial observation case of the observer, systematically characterizes the privacy conditions, and gives new distributed design under limited privacy budgets. 
The main contributions are summarized as follows. 
\begin{itemize}
\item  We identify the topology privacy vulnerability of 
consensus networks arising from their observed dynamic process. 
We consider both partial and full observation settings, which reflect different levels of accessible topology information.
Then, a feedback modification based preservation method is proposed, 
where the modification magnitude directly reflects the assigned privacy level and no new links are added. 

\item We derive conditions of the topology unsolvability from partial observations and the topology inaccuracy from full observations, which ensure the topology privacy while respecting local communication and convergence constraints. 
We characterize the resulting feasible solution space, and give sufficient explicit constructions of feedback. 
This feedback can be directly applied in centralized–design and distributed–execution architectures. 

\item To address the inherent tension between topology privacy and consensus convergence in a distributed manner, we introduce a controllable tradeoff that allows slight consensus deviations while guaranteeing the edge privacy. 
Then, we propose a distributed modification design under limited privacy budgets, where a low-complexity heuristic algorithm is developed to obtain the optimal privacy performance in hiding existing edges. 
Extensive simulations verify the effectiveness of the proposed method. 
\end{itemize}


Our method fundamentally differs from noise-adding methods \cite{katewa2020differential,Hawkins2024node,li2025Preserving}, which require persistent randomization and could still permit accurate topology identification from long-horizon observations. 
In contrast, our method requires only finite-time implementation and provides controllable guarantees on the privacy of edge weights. 
This method also conceptually echoes the topology discernibility \cite{hao2024discernibility,fan2024output}, but operates on complementary directions of intentional privacy design.

The remainder of this paper is organized as follows. 
In Section \ref{sec:Preliminaries}, some preliminaries and the problem of interest are presented. 
The topology privacy preservation conditions 
are given in Section \ref{sec:conditions}. 
Section \ref{sec:designs} presents the distributed topology privacy preservation design. 
Comparative simulations are shown in Section \ref{sec:simulation}.
Finally, Section \ref{sec:conclusion} concludes the paper.

\section{Preliminaries and Problem of Formulation}\label{sec:Preliminaries}

Consider a NS described by a digraph $\mathcal{G}=(\mathcal{V},\mathcal{E})$, where the vertex set $\mathcal{V}=\{1, \cdots, n\}$ enumerates $n$ interacting nodes, and the edge set $\mathcal{E} \subset \mathcal{V} \times \mathcal{V}$ characterizes directional edges. 
The presence of an edge $(i,j) \in \mathcal{E}$ signifies that node $i$ receives data from node $j$. 
The topology matrix of $\mathcal{G}$ is characterized by $W = [W_{ij}] \in \mathbb{R}^{n \times n}$, where $W_{ij} > 0$ denotes $(i,j) \in \mathcal{E}$, and $W_{ij} = 0$ otherwise. 
For node $i$, its incoming neighbors are defined as $\mathcal{N}_i = \{ j \in \mathcal{V} : (i,j) \in \mathcal{E} \}$, and $\mathcal{N}_i^+=\mathcal{N}_i \cup \{i\}$ if node $i$ is included. 

In this paper, we use $\otimes$ to denote the Kronecker product operator. 
Let $\bm{1}$ and $I$ be the all-one vector and identity matrix in compatible dimensions, respectively. 
The notation $\operatorname{rank}(\cdot)$ and $(\cdot)^\intercal$ denote the rank and transpose of a matrix, respectively. 
Given a series of vectors $v_1,\cdots,v_n$, the subspace spanned by these vectors is denoted by $\operatorname{span}\{v_1,\cdots,v_n\}$, 
and let $\operatorname{dim}(\cdot)$ be the dimension number of a vector space. 
For a matrix $A$, we use $A_{[i,:]}$ to denote the $i$-th row of $A$.

\subsection{Network System Model}
For distinction, we use the subscript $(\cdot)^*$ to denote the nominal NS without extra feedback, which is modeled as 
\begin{equation}\label{eq:local_model}
\begin{aligned}
x_{t+1}^{*,i} =  W_{ii} x_t^{*,i} + \sum\nolimits_{j\in \mathcal{N}_{i}} W_{ij} x_t^{*,j},
\end{aligned}
\end{equation}
where $x_{t}^{*,i}$ is the $i$-th node's state at time $t\in\{0,1,2,\cdots\}$. 
For the topology $W$, the following assumption is made. 
\begin{assumption}\label{assu:topo}
The topology matrix $W$ is row-stochastic, and the associated graph $\mathcal{G}$ is strongly connected. 
\end{assumption}

Based on Assumption \ref{assu:topo}, all nodes in \eqref{eq:local_model} will reach the following consensus point as $t\to\infty$ \cite[Theorem 5.1]{FB-LNS}
\begin{equation}\label{eq:consensus_p}
x^*_{\infty}=\lim_{t\to\infty} x_{t}^* = \lim_{t\to\infty} W^t x_0^* = (\pi^\intercal x_0^*) \bm{1},
\end{equation}
where $x_{t}^*=[x_{t}^{*,1},\cdots,x_{t}^{*,n}]^\intercal \in\mathbb{R}^n$ is the global system state, and $\pi \in\mathbb{R}^{n}$ is the normalized left eigenvector of the eigenvalue $1$ (also called the left-dominant eigenvector), satisfying $\pi^\intercal \bm{1}=1$. 
The network model \eqref{eq:local_model} is popular in many areas, e.g., formation forming and tracking control of mobile robots, opinion dynamics in social networks, and frequency regulation in power grids, etc \cite{olfati2007consensus}.


\subsection{Topology Identifiability from Observed Data}\label{subsec:infer_model}

Suppose that an external observer collects consecutive $T$ steps of states
 of the nominal NS \eqref{eq:local_model}, and the observations are given by  
\begin{equation}\label{eq:observation}
y_t^*=C x_t^*,~t=0,\cdots,T-1,
\end{equation}
where $C\in\mathbb{R}^{m\times n}$ ($m\le n$) is the observation matrix determined by the capability of the external observer. 
Specifically, we assume that $(A,C)$ is observable, which is standard in system identification literature \cite{ljung1999system}, but more general than those assuming that all states of the NS are measurable in many topology inference works \cite{mateos2019connecting,mo2017privacy,he2020differential}. 
Considering this partial observation capability is meaningful, 
because an external observer may not collect all state information, e.g., it cannot overhear every state broadcast due to limited radio-frequency coverage in wireless sensor networks. 
Then, the following observability matrix $Q_o^*$ has full column rank,
\begin{equation}\label{eq:Qo}
Q_o^*=\left[C^\intercal ,(CW)^\intercal ,\cdots, (C W^{n-1})^\intercal  \right]^\intercal  \in \mathbb{R}^{nm \times n}.
\end{equation}
By stacking $n$ steps of future observations, we have
\begin{equation}\label{eq:output-state}
\tilde{y}_{t:t+n-1}^*\!=\! [(y_t^*)^\intercal , (y_{t+1}^*)^\intercal , \cdots, (y_{t+n-1}^*)^\intercal ]^\intercal \! =\! Q_o^* x_t^*.
\end{equation}
Furthermore, it is convenient to use a block Hankel matrix, that collects all $T \ge nm + n - 1$ steps of observations into
\begin{equation} \label{eq:Hankel}
    \tilde{Y}_{0:T-n}^{*}=[\tilde{y}_{0:n-1}^{*},\tilde{y}_{1:n}^{*},\cdots,\tilde{y}_{T-n:T-1}^{*}] \in \mathbb{R}^{n m \times  (T-n+1)}.
\end{equation}
In the system identification literature, the data matrix $\tilde{Y}_{0:T-n}^{*}$ is generally required to have rank $n$, such that $W$ is identifiable up to a similarity transformation \cite{vanwaarde2021topology,shvydun2024system}, given by
\begin{align}\label{eq:estimator1}
    Q_W=g(\tilde{Y}_{0:T-n}^{*}),
\end{align}
where $g(\cdot)$ is a composite function of $\tilde{Y}_{0:T-n}^{*}$. 
The details of deriving $Q_W$ are provided in Appendix \ref{appen:inference}. 
Note that the identifiability from data is also referred to as data informativity~\cite{van2023informativity}, and for convenience, no distinction is made between the two in this paper. 
We remark that although $Q_W$ does not necessarily equal to $W$, it embeds sensitive spectral information about the topology. 
Therefore, preserving privacy of $Q_W$ constitutes an indirect but highly practical form of topology privacy.

Particularly, if the observer has full-state access ($C=I$), 
$W$ is uniquely identifiable when the collected data matrix $X_{0:T-1}^*=[x_{0}^*,x_{1}^*,\cdots,x_{T-1}^*]\in \mathbb{R}^{n \times T}$ has full rank $n$,
which can be represented by 
\begin{equation}\label{eq:old_estimator}
W = X_{1:T}^* (X_{0:T-1}^*)^{\dag},
\end{equation} 
where $X_{1:T}^*=[x_{1}^*,x_{2}^*,\cdots,x_{T}^*]$ and $(\cdot)^\dag$ represents the pseudo-inverse of a matrix. 
In this case, the indirect privacy preservation of $Q_W$ reduces to the direct preservation of $W$. 
Since this case corresponds to the strongest observer, it is also of great importance to deal with this worst-case privacy breach.

\subsection{Problem of Interest}

Based on the above topology identifiability from data, 
we aim to preserve topology privacy by intentionally violating the identifiability conditions, such that the available observations to external observers are insufficient to uniquely recover the true topology. 
Specifically, our key idea is to introduce a feedback design in the nodes' dynamic process, described by
\begin{equation}\label{eq:local_feedback}
\left\{\begin{aligned}
u^{i}_t &=\sum\nolimits_{j\in\mathcal{N}_i} K_{ij} x^{j}_t \\
x_{t+1}^{i} &=  W_{ii} x_t^{i} + \sum\nolimits_{j\in \mathcal{N}_{i}} W_{ij} x_t^{j} + u^{i}_t
\end{aligned}\right.,
\end{equation}
where $x_0=x_0^*$ and $K=[K_{ij}]_{i,j=1}^{n}\in\mathbb{R}^{n\times n}$ is the nonzero feedback matrix supported on communication links.  
The introduced $u_i$ serves as a local perturbation on a node to obfuscate external observers, e.g., perturbing a social user's opinion update in opinion dynamics \cite{abawajy2016privacy} or the velocity update of robots in flocking tasks \cite{Wang2024Topology}. 
Note that the feedback $K$ needs to be well designed to preserve the original consensus point \eqref{eq:consensus_p}, which is crucial for the reliability of consensus networks. 
Accordingly, the global form of \eqref{eq:local_feedback} is written as
\begin{equation}\label{eq:global_feedback}
x_{t+1}= (W+K) x_t \triangleq \tilde{W} x_t. 
\end{equation}
For the external observer, the superscript $(\cdot)^*$ on notations in Sec. \ref{subsec:infer_model} (e.g., $y_t^*$, $\tilde{Y}_{0:T-n}^*$, $X_{0:T-1}^*$) are dropped to match the dynamics \eqref{eq:global_feedback}.

For the case of an observable NS, we preserve the indirect-level privacy of $W$ by rendering \textbf{unsolvability} of topology based on the data $\{y_t\}_{t=0}^T$. 
This design is formulated to solve the following problem: 
\begin{subequations}\label{eq:prob_a}
\begin{align}
\textbf{$P_a$}:\quad \text{Find} &~~K\\
\text{s.t.} &~~\lim_{t\to\infty} \| x_{t}- x^*_{\infty} \|_2=0, \label{eq:converge_state} \\
&~~\tilde{W}_{ij}\ge 0~\forall i,j\in\mathcal{V}, \label{eq:row_stochastic} \\
&~~K_{ij}=0~\text{if}~W_{ij}=0~(i\neq j), \label{eq:K_sparse}  \\
& ~~\operatorname{rank}(\tilde{Y}_{0:T-n})<n, \label{eq:require_rank}
\end{align}
\end{subequations}
where $\tilde{Y}_{0:T-n}$ is the alternative of $\tilde{Y}_{0:T-n}^*$ under the dynamics \eqref{eq:local_feedback}. 
The necessities of the above constraints are as follows. 
\begin{itemize}
    \item First, \eqref{eq:converge_state} requires that the consensus convergence \eqref{eq:local_model} is not disrupted, which is essential for the underlying coordination task in the NS. 
    \item Second, \eqref{eq:row_stochastic} maintains the nonnegative weighted-averaging structure of consensus protocols, ensuring that the transmitted message format and computational operations remains unchanged. 
    \item Third, \eqref{eq:K_sparse} guarantees that no new links are introduced in the network, and each node is constrained to interact only with its original neighbors without additional communication overheads. 
    \item Fourth, \eqref{eq:require_rank} ensures that there exist multiple topologies with different spectrum that would lead to the same $\tilde{Y}_{0:T-n}$. 
    Then, the topology cannot be uniquely determined from the data, and the privacy of $W$ is preserved. 
\end{itemize}
\noindent 
We assume that the external observer has no prior knowledge about the feedback design and the network structure. 
Then, the resulting topology privacy is guaranteed by making different $W+K$ undistinguishable in terms of the measured behavior in \textbf{$P_a$}, and thus is independent of specific inference methods employed by observers. 
Extensions of cases where the observer knows the design of $K$ or certain network properties (e.g., sparsity), along with the corresponding topology privacy analysis, are left for future work.

Notice that when the observer can fully access states of all nodes (i.e., $C=I$), 
by following the identifiability definition \eqref{eq:old_estimator}, the constraint \eqref{eq:require_rank} degrades to 
\begin{equation}\label{eq:special_X}
\operatorname{rank}(X_{0:T-1})<n,
\end{equation} 
where $X_{0:T-1}=[x_{0},x_{1},\cdots,x_{T-1}]$. 
When \eqref{eq:special_X} holds, there exist multiple $W$ that would lead to the same $\tilde{X}_{0:T-1}$, which also preserves the privacy of $W$. 
Intuitively, \eqref{eq:require_rank} provides a strong privacy preservation when $C\neq I$, because both the topology $W$ itself and its spectral structure cannot be uniquely determined. 
In contrast, \eqref{eq:special_X} provides a slightly weaker preservation when $C=I$, guaranteeing that only $W$ itself cannot be uniquely determined.

Moreover, even if a worst-case where $\operatorname{rank}(X_{0:T-1})<n$ for $C=I$ is infeasible may occur, we can still provide a last line of privacy preservation design that renders \textbf{inaccuracy} of identifiable topology from the data $\{x_t\}_{t=0}^T$. 
This design is formulated to solve the following problem: 
\begin{subequations}\label{eq:prob_b}
\begin{align}
\textbf{$P_b$}:\quad \text{Find} &~~K\\
\text{s.t.} &~ \eqref{eq:converge_state},~\eqref{eq:row_stochastic},~\text{and}~\eqref{eq:K_sparse}, \\
&~~ W\neq X_{1:T} (X_{0:T-1})^{\dag}, \label{eq:requirement1} 
\end{align}
\end{subequations}
where $X_{0:T-1}$ has full rank and $X_{1:T}=[x_{1},x_{2},\cdots,x_{T}]$. 
The difference of \textbf{$P_b$} from \textbf{$P_a$} is that, here a topology matrix can be uniquely determined from the collected data $X_{0:T}$ while having bias with $W$, as demonstrated in \eqref{eq:requirement1}. 
Since the observer has no knowledge about the design of $K$, it cannot recover the true topology deterministically in the case of \textbf{$P_b$}. 
This can be regarded as the least effects we can achieve to preserve privacy of $W$.

To address the above problems, we first derive the conditions under which the constraints in \textbf{$P_a$} and \textbf{$P_b$} are feasible and characterize the corresponding solution space. 
We then tackle the core challenge of preserving topology privacy in a distributed manner while simultaneously maintaining the original consensus behavior. 
Along this line, we introduce a controllable tradeoff between privacy and convergence performance, and provide distributed design of the required feedback. 
The roadmap of this work is illustrated in Fig.~\ref{fig:roadmap}.

\begin{figure}[t]
\centering
\includegraphics[width=0.48\textwidth]{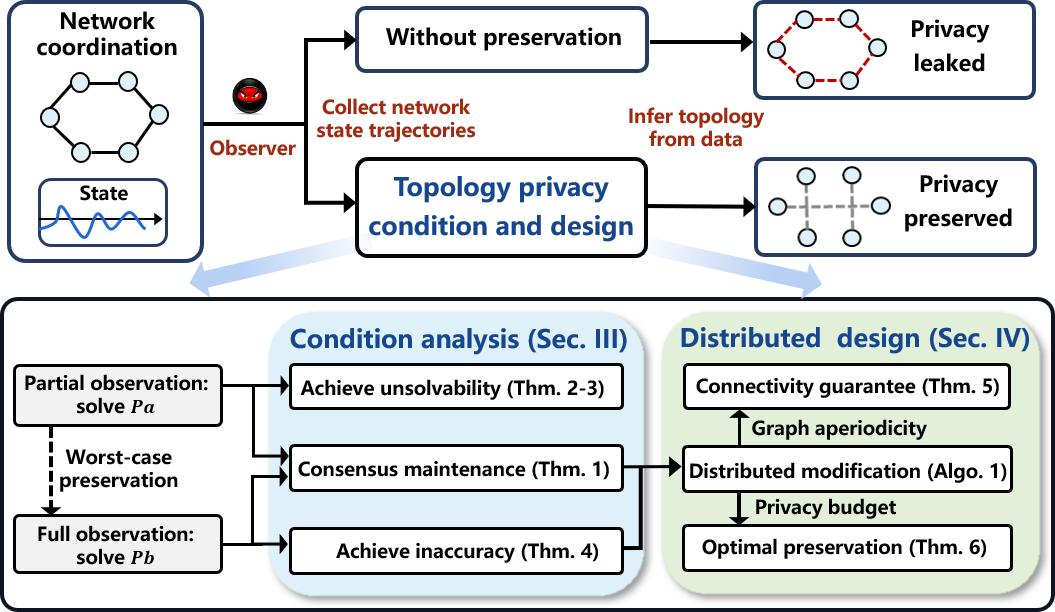}
\vspace{-8pt}
\caption{The roadmap illustration of this work.}
\label{fig:roadmap}
\end{figure}

\section{Privacy Preservation Conditions Analysis}\label{sec:conditions}

In this section, we analyze the conditions of $K$ to meet the privacy and performance constraints. 
Specifically, Sec.~\ref{subsec:unsolve} presents the general conditions of the unsolvability in \textbf{$P_a$}; 
Sec.~\ref{subsec:unobservable} and Sec.~\ref{subsec:enforcing_x0} show the detailed designs to meet the unsolvability in \textbf{$P_a$} from two perspectives; 
Sec.~\ref{subsec:inaccurate} discusses how to meet the inaccuracy in \textbf{$P_b$}. 

First, we give the consensus preservation condition regarding \eqref{eq:converge_state} to support the following analysis. 


\begin{lemma}[see Lemma 1 in \cite{yushan2025Resistant}]\label{th:eigenvalue}
Under Assumption \ref{assu:topo}, the consensus convergence condition \eqref{eq:converge_state} is achieved if and only if $K$ satisfies
\begin{align}
    &K \bm{1}=0, \label{eq:K_c1}\\
    &\pi^\intercal K=0,\label{eq:K_c2}\\
    &|\tilde{\lambda}_i|<1,~i=2,\cdots,n,  \label{eq:K_c3}
\end{align}
where $\pi$ is the left-dominant vector of $W$, and $\tilde{\lambda}_i$ is the $i$-th eigenvalue of $\tilde{W}$. 
\end{lemma}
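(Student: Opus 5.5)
The plan is to read \eqref{eq:converge_state} as a requirement that holds for every initial condition $x_0=x_0^*\in\mathbb{R}^n$. Under this reading, and using \eqref{eq:global_feedback} and \eqref{eq:consensus_p}, the condition is equivalent to the matrix limit
\begin{equation*}
\lim_{t\to\infty}\tilde W^t = P \triangleq \bm{1}\pi^\intercal .
\end{equation*}
To see this, apply the vector limit to the standard basis vectors. Both directions of the lemma then become statements about when powers of $\tilde W=W+K$ converge to the rank-one projector $P$. Here $\pi$ is fixed by $W$ through Assumption \ref{assu:topo} and satisfies $\pi^\intercal\bm{1}=1$, so $P^2=P$. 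We label the eigenvalues of $\tilde W$ so that $\tilde\lambda_1=1$.

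\textbf{Sufficiency.} From \eqref{eq:K_c1} we get $\tilde W\bm{1}=W\bm{1}=\bm{1}$, and from \eqref{eq:K_c2} we get $\pi^\intercal\tilde W=\pi^\intercal W=\pi^\intercal$. Hence $\tilde W P=P\tilde W=P$. Define $R=\tilde W-P$. A direct computation gives $RP=PR=0$, and therefore $\tilde W^t=P+R^t$ for $t\ge1$. The spectrum of $R$ is $\{0,\tilde\lambda_2,\dots,\tilde\lambda_n\}$. To see this, work in a basis adapted to the decomposition $\mathbb{R}^n=\operatorname{span}\{\bm{1}\}\oplus\ker\pi^\intercal$, which is invariant under $\tilde W$. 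On $\operatorname{span}\{\bm{1}\}$ the map $\tilde W$ acts as $1$ and $R$ acts as $0$. On $\ker\pi^\intercal$ the two maps coincide. By \eqref{eq:K_c3} the spectral radius of $R$ is strictly less than $1$, so $R^t\to0$ and $\tilde W^t\to P$.

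\textbf{Necessity.} Suppose $\tilde W^t\to P$. Passing to the limit in $\tilde W^{t+1}=\tilde W\,\tilde W^t=\tilde W^t\,\tilde W$ gives $\tilde W P=P=P\tilde W$. Multiplying the first identity by $\pi$ on the right yields $\tilde W\bm{1}=\bm{1}$, that is \eqref{eq:K_c1}. Multiplying the second identity by $\bm{1}$ on the left yields $\pi^\intercal\tilde W=\pi^\intercal$, that is \eqref{eq:K_c2}. It remains to obtain the spectral condition \eqref{eq:K_c3}; I expect this to be the main obstacle.

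\textbf{Plan for the spectral condition.} I would use the Jordan form $\tilde W=SJS^{-1}$, so that $\tilde W^t$ converges if and only if every Jordan block $J_k(\lambda)^t$ converges. Convergence of a block forces one of two cases. Either $|\lambda|<1$, or $\lambda=1$ and the block is $1\times1$. A block with $|\lambda|=1$, $\lambda\neq1$, has powers that oscillate, and a block with $|\lambda|>1$, or a nontrivial block at $\lambda=1$, has powers that grow. In the limit, only the semisimple eigenvalue $1$ survives, so $\operatorname{rank}(\lim\tilde W^t)$ equals the algebraic multiplicity of $\lambda=1$. Since $\operatorname{rank}P=1$, the eigenvalue $1$ is simple, and every other eigenvalue satisfies $|\tilde\lambda_i|<1$, which gives \eqref{eq:K_c3}. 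Finally, I would remark why the hypothesis ``for every $x_0$'' is needed: if convergence were required only for a particular $x_0$, the condition would be weaker, so the lemma should be read in the universal sense, consistent with the convergence requirement \eqref{eq:converge_state} being imposed regardless of the initial state.
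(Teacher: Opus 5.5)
Your proof is correct and complete. The paper does not prove this lemma itself; it only cites Lemma~1 of \cite{yushan2025Resistant}. What you have written is therefore a self-contained proof along the standard route, the one used to characterize when powers of an averaging matrix converge to $\frac{1}{n}\bm{1}\bm{1}^\intercal$, here with $\frac{1}{n}\bm{1}^\intercal$ replaced by $\pi^\intercal$.

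Sufficiency follows from $\tilde W^t=P+R^t$, with the spectrum of $R$ read off from the $\tilde W$-invariant splitting $\operatorname{span}\{\bm 1\}\oplus\ker\pi^\intercal$. That splitting also correctly pins down what \eqref{eq:K_c3} means: all eigenvalues other than one copy of $1$. Necessity follows from the fixed-point identities $\tilde W P=P=P\tilde W$. Your Jordan-block argument then forces the eigenvalue $1$ to be semisimple with multiplicity $\operatorname{rank}P=1$.

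Two small remarks.

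\begin{itemize}
\item Reading \eqref{eq:converge_state} as holding for every $x_0$ is not just a convenience; the ``only if'' direction needs it. For the single initial state $x_0=c\bm 1$, any $K$ with $K\bm 1=0$ gives $x_t\equiv c\bm 1=x^*_\infty$, whether or not \eqref{eq:K_c2} and \eqref{eq:K_c3} hold.
\item In the necessity step, multiply $P\tilde W=P$ on the left by $\bm 1^\intercal$ (or by $\pi^\intercal$), not by $\bm 1$.
\end{itemize}
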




\begin{lemma}\label{lemma:constraint}
Denote $\mathcal{K}=\{K_0 \in\mathbb{R}^{n\times n}: K_0 \bm{1}=0, ~ \pi^\intercal K_0 =0\}$. 
If $\mathcal{K}$ contains one non-trivial element, then there always exists a non-trivial $K \in \mathcal{K}$, such that the eigenvalue condition \eqref{eq:K_c3} and the non-negativity constraint \eqref{eq:row_stochastic} are satisfied. 
\end{lemma}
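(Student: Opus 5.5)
Given a nontrivial $K_0\in\mathcal{K}$, we take $K=\epsilon K_0$ with $\epsilon>0$ small. Since $\mathcal{K}$ is a linear subspace, $K$ remains a nontrivial element of $\mathcal{K}$, so \eqref{eq:K_c1} and \eqref{eq:K_c2} hold for every $\epsilon$. It therefore suffices to show that both \eqref{eq:K_c3} and \eqref{eq:row_stochastic} hold for all sufficiently small $\epsilon>0$.

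\textbf{Eigenvalue condition.} Since $K\bm{1}=0$, we have $\tilde{W}\bm{1}=\bm{1}$. Since $\pi^\intercal K=0$, we have $\pi^\intercal\tilde{W}=\pi^\intercal$. Hence $1$ remains an eigenvalue of $\tilde W$ with the same left and right eigenvectors. We decompose $\mathbb{R}^n=\operatorname{span}\{\bm{1}\}\oplus\ker(\pi^\intercal)$. Both subspaces are invariant under $W$ and under $K$: if $\pi^\intercal v=0$ then $\pi^\intercal Wv=\pi^\intercal v=0$ and $\pi^\intercal Kv=0$. Thus the remaining eigenvalues $\tilde\lambda_2,\dots,\tilde\lambda_n$ are exactly the eigenvalues of $(W+\epsilon K_0)|_{\ker\pi^\intercal}$.

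By Assumption~\ref{assu:topo} and the convergence in \eqref{eq:consensus_p}, which requires primitivity (or we take this as the standing convergence hypothesis), the spectral radius of $W|_{\ker\pi^\intercal}$ is some $\rho<1$. Eigenvalues depend continuously on matrix entries, so there exists $\epsilon_1>0$ such that for $\epsilon<\epsilon_1$ every eigenvalue of the restricted operator has modulus below $(1+\rho)/2<1$. This gives \eqref{eq:K_c3}.

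\textbf{Non-negativity.} The constraint requires $W_{ij}+\epsilon(K_0)_{ij}\ge0$ for all $i,j$. The main obstacle is the entries with $W_{ij}=0$ and $(K_0)_{ij}<0$, which no small $\epsilon$ can fix. Two cases arise.
\begin{itemize}
\item Off-diagonal entries with $W_{ij}=0$: we may assume $K_0$ obeys the sparsity pattern \eqref{eq:K_sparse}, as intended for $\mathcal{K}$ in context. Then $(K_0)_{ij}=0$ wherever $W_{ij}=0$, and these entries are unaffected.
\item Diagonal entries with $W_{ii}=0$: here I would argue that a nonzero $(K_0)_{ii}$ of the wrong sign can be handled, or else restrict to $K_0$ supported on the positive entries of $W$. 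The latter subset is still a linear subspace, and it is nontrivial whenever the intended $\mathcal{K}$ (with support constraint) is.
\end{itemize}
On the positive entries we set $\epsilon_2=\min_{(K_0)_{ij}<0} W_{ij}/|(K_0)_{ij}|>0$, and then $\tilde W\ge0$ for all $\epsilon\le\epsilon_2$.

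Taking $\epsilon<\min\{\epsilon_1,\epsilon_2\}$ gives a nontrivial $K=\epsilon K_0$ satisfying \eqref{eq:K_c1}--\eqref{eq:K_c3} and \eqref{eq:row_stochastic}.

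I expect the delicate points to be two. The first is justifying $\rho<1$ from Assumption~\ref{assu:topo}, via the consensus claim in \eqref{eq:consensus_p}. The second is the support issue for zero entries of $W$, which must be resolved by reading $\mathcal{K}$ together with the sparsity constraint \eqref{eq:K_sparse}.
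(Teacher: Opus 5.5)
Your scaling argument is the one the paper uses. The paper defers to \cite[Theorem 1]{yushan2025Resistant} and argues by continuity of eigenvalues and entries under the perturbation $\epsilon K_0$. Your splitting $\mathbb{R}^n=\operatorname{span}\{\bm{1}\}\oplus\ker(\pi^\intercal)$ makes the eigenvalue part rigorous, and you correctly trace $\rho<1$ to the primitivity implicit in \eqref{eq:consensus_p}. The step that fails is your fallback for diagonal entries with $W_{ii}=0$. Restricting to $K_0$ supported on $\operatorname{supp}(W)$ can leave only $K_0=0$ even when the subspace cut out by \eqref{eq:K_sparse} is nontrivial. Take $n=3$, $W_{12}=W_{23}=1$, $W_{31}=W_{32}=1/2$, and all other entries zero. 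This $W$ is strongly connected, aperiodic (cycles of lengths $2$ and $3$), and has $\pi=(1,2,2)^\intercal/5$. Rows $1$ and $2$ each allow one entry, so the row sums force $K_{12}=K_{23}=0$. Row $3$ gives $K_{31}=-K_{32}=b$, and the first entry of $\pi^\intercal K_0=0$ reads $\pi_3 b=0$, so $K_0=0$. Yet $I-W$ is a nonzero element of $\mathcal{K}$ obeying \eqref{eq:K_sparse}. So your claim that the restricted subspace is nontrivial whenever the sparsity-constrained one is does not hold.

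The repair is short either way. For the lemma exactly as stated, $\mathcal{K}$ has dimension $(n-1)^2$ and always contains $I-W$, so you need not use the given $K_0$ at all. Take $K=\alpha(I-W)$ with $\alpha\in(0,1)$. Then $\tilde W=(1-\alpha)W+\alpha I\ge 0$ and $\tilde\lambda_i=(1-\alpha)\lambda_i+\alpha$. For $i\ge 2$ we have $\lambda_i\neq 1$ by irreducibility, so $\tilde\lambda_i$ is a strict convex combination of two distinct points of the closed unit disk and has modulus $<1$; this even removes the need for primitivity.

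The lemma is actually invoked in Theorems~\ref{th:K_existence} and~\ref{th:exist_K}, where extra homogeneous constraints such as $Kv=0$ exclude $I-W$ and the given $K_0$ must be kept. There, the systems \eqref{eq:equations_K} and \eqref{eq:new} already set $K_{ij}=0$ for every zero entry of $W$, diagonal included, because $\mathcal{Z}$ collects all nonzero entries of $W$. So the right hypothesis is $\operatorname{supp}(K_0)\subseteq\operatorname{supp}(W)$, or merely $(K_0)_{ij}\ge 0$ wherever $W_{ij}=0$. Under it your $\epsilon_2$ is well defined and your proof is complete, so state that hypothesis rather than try to derive it.
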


The proof of this lemma can be referred to \cite[Theorem 1]{yushan2025Resistant}. 
The key idea is to leverage the continuity of eigenvalues and elements of a matrix under disturbances \cite[Ch 2.1.8]{kato2013perturbation}. 
The above two results reveal that once the homogeneous equalities \eqref{eq:K_c1} and \eqref{eq:K_c2} admit a nontrivial solution,
a matrix $K$ that also meets the eigenvalue modulus constraint \eqref{eq:K_c3} and non-negativity constraint \eqref{eq:row_stochastic} always exists.

\subsection{Conditions for Topology Unsolvability in \textbf{$P_a$}}\label{subsec:unsolve}

In this part, we present the unsolvability conditions for the constraint \eqref{eq:require_rank} in \textbf{$P_a$}. 
We proceed the analysis by introducing invariant subspaces. 
\begin{lemma}{\cite[Ch. 1]{Gohberg2006invariant}} \label{le:condition}
Given a linear transformation $A$ and a vector $x\in \mathbb{R}^{n}$, 
the subspace $\mathcal{A}_\infty = \operatorname{span}\{x,Ax,A^2x,\cdots\}$ is equivalent to 
$$\mathcal{A}_n = \operatorname{span}\{x,Ax,A^2x,\cdots,A^{n-1}x \}.$$ 
Moreover, $\mathcal{A}_n$ is $A$-invariant, i.e., $Ax \in \mathcal{A}_n$ for every vector $x \in \mathcal{A}_n$. 
\end{lemma}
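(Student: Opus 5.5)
The plan is to use the Cayley--Hamilton theorem to get the equality $\mathcal{A}_\infty=\mathcal{A}_n$, and then to obtain $A$-invariance as an immediate consequence of that equality. Throughout, $A$ is viewed as an $n\times n$ real matrix acting on $\mathbb{R}^n$. Let $p(s)=s^n+c_{n-1}s^{n-1}+\cdots+c_0$ be its characteristic polynomial. Cayley--Hamilton gives $p(A)=0$, hence
\begin{equation*}
A^n x = -\sum_{k=0}^{n-1} c_k A^k x \in \mathcal{A}_n .
\end{equation*}

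\textbf{Step 1: $\mathcal{A}_\infty=\mathcal{A}_n$.} The inclusion $\mathcal{A}_n\subseteq\mathcal{A}_\infty$ is trivial, since the generators of $\mathcal{A}_n$ are among those of $\mathcal{A}_\infty$. For the reverse inclusion, I show by induction on $k\ge n$ that $A^k x\in\mathcal{A}_n$. The base case $k=n$ is the display above. For the inductive step, suppose $A^{k}x=\sum_{j=0}^{n-1} a_j A^j x$. Multiplying by $A$ gives
\begin{equation*}
A^{k+1}x=\sum_{j=0}^{n-2} a_j A^{j+1}x + a_{n-1}A^n x .
\end{equation*}
The terms in the first sum lie in $\mathcal{A}_n$, and the last term lies in $\mathcal{A}_n$ by the base case. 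Hence $A^{k+1}x\in\mathcal{A}_n$. Therefore every generator of $\mathcal{A}_\infty$ lies in $\mathcal{A}_n$, so $\mathcal{A}_\infty\subseteq\mathcal{A}_n$. (Alternatively, one can argue via a dimension or stabilization argument: the chain $\mathcal{A}_1\subseteq\mathcal{A}_2\subseteq\cdots$ stabilizes as soon as two consecutive terms coincide, which must happen by index $n$. The Cayley--Hamilton route is cleaner, so I would use it.)

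\textbf{Step 2: invariance.} Take any $v\in\mathcal{A}_n$ and write $v=\sum_{j=0}^{n-1}b_jA^jx$. Then
\begin{equation*}
Av=\sum_{j=0}^{n-1}b_jA^{j+1}x,
\end{equation*}
which is a finite combination of vectors in $\mathcal{A}_\infty$. By Step 1, $\mathcal{A}_\infty=\mathcal{A}_n$, so $Av\in\mathcal{A}_n$.

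There is no real obstacle in this argument. The only point needing care is the notational clash in the statement, where $x$ denotes both the generating vector and the generic element of $\mathcal{A}_n$. I would rename the generic element (to $v$, as above) so that the invariance claim reads correctly.
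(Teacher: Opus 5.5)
Your proof is correct and complete. Cayley--Hamilton plus your induction gives $\mathcal{A}_\infty=\mathcal{A}_n$, invariance follows at once, and renaming the generic element to $v$ resolves the notational clash in the statement. The paper gives no proof of its own and only cites \cite[Ch.~1]{Gohberg2006invariant}, so there is no route to compare against. Your Cayley--Hamilton argument is the standard one, and the paper itself uses the same reasoning later when it turns the condition $\operatorname{rank}(X_{0:n-1})<n$ into a polynomial identity in $\tilde W$.
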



To facilitate the analysis, we define the proper invariant subspace of a matrix $A\in \mathbb{R}^n$ as a linear subspace $\mathcal{S}(A) \subsetneq \mathbb{R}^n$ such that
\begin{equation}
A x \in \mathcal{S}(A), \quad \forall x \in \mathcal{S}(A).
\end{equation}
Here, the term ``proper'' excludes the trivial space $\mathbb{R}^n$, and thus $\dim(\mathcal{S}(A)) < n$. 
Note that such a proper invariant subspace is generally not unique, and we do not distinguish among different choices of $\mathcal{S}(\cdot)$ when no ambiguity arises. 
We then present the following result.



\begin{lemma}\label{le:possibility}
The observation sequence $\tilde{Y}_{0:T-n}$ ($T\ge nm+n-1$) has rank less than $n$ if one of the following conditions holds:
\begin{itemize}
\item there exists a nonzero vector $v\in \mathbb{R}^n$ such that 
\begin{equation}\label{eq:le_sufficient1}
  Q_o v=0,
\end{equation}
where $Q_o=\left[C^\intercal ,(C \tilde{W})^\intercal ,\cdots, (C \tilde{W}^{n-1})^\intercal  \right]^\intercal $ is the observability matrix. 
\item the initial state $x_0$ satisfies 
\begin{align}\label{eq:state_space}
    x_{0}\in\mathcal{S}(\tilde{W}). 
\end{align}
\end{itemize}
\end{lemma}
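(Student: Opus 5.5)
The plan is to factor the Hankel matrix through the state trajectory and bound its rank by the rank of that factor. Under the feedback dynamics \eqref{eq:global_feedback}, the stacked outputs satisfy $\tilde{y}_{t:t+n-1}=Q_o x_t$ with $Q_o$ built from $\tilde{W}$, exactly as in \eqref{eq:output-state}. Stacking columns as in \eqref{eq:Hankel} gives
\begin{equation*}
\tilde{Y}_{0:T-n}=Q_o X_{0:T-n},\qquad X_{0:T-n}=[x_0,\tilde{W}x_0,\cdots,\tilde{W}^{T-n}x_0]\in\mathbb{R}^{n\times(T-n+1)}.
\end{equation*}
Hence $\operatorname{rank}(\tilde{Y}_{0:T-n})\le\min\{\operatorname{rank}(Q_o),\operatorname{rank}(X_{0:T-n})\}$. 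It therefore suffices to show that each of the two bullet conditions forces one of these two factors to have rank below $n$.

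\emph{First condition.} If there is a nonzero $v$ with $Q_o v=0$, then $\ker Q_o\neq\{0\}$. Since $Q_o$ has $n$ columns, rank–nullity gives $\operatorname{rank}(Q_o)\le n-1$, and so $\operatorname{rank}(\tilde{Y}_{0:T-n})<n$.

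\emph{Second condition.} Suppose $x_0\in\mathcal{S}(\tilde{W})$ for some proper invariant subspace. By invariance and induction, $x_t=\tilde{W}^t x_0\in\mathcal{S}(\tilde{W})$ for all $t\ge0$. Every column of $X_{0:T-n}$ then lies in $\mathcal{S}(\tilde{W})$, so
\begin{equation*}
\operatorname{rank}(X_{0:T-n})\le\dim(\mathcal{S}(\tilde{W}))<n,
\end{equation*}
and again $\operatorname{rank}(\tilde{Y}_{0:T-n})<n$. Equivalently, by Lemma~\ref{le:condition}, the column space of $X_{0:T-n}$ is contained in the Krylov space $\operatorname{span}\{x_0,\dots,\tilde{W}^{n-1}x_0\}$. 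That space is the smallest $\tilde{W}$-invariant subspace containing $x_0$, so it sits inside $\mathcal{S}(\tilde{W})$.

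There is no deep obstacle here. The only step needing care is the factorization $\tilde{Y}_{0:T-n}=Q_o X_{0:T-n}$. One must check that, after the superscript $(\cdot)^*$ is dropped, the identity \eqref{eq:output-state} carries over with $W$ replaced by $\tilde{W}$ in $Q_o$. One must also check that the index range $t=0,\dots,T-n$ matches the column count of \eqref{eq:Hankel}. The condition $T\ge nm+n-1$ is not needed for the upper bound; it only ensures the matrix is wide enough that rank $n$ would otherwise be attainable.
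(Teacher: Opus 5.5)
Your proof is correct and follows the paper's route: factor $\tilde{Y}_{0:T-n}=Q_o X_{0:T-n}$, bound the rank by $\min\{\operatorname{rank}(Q_o),\operatorname{rank}(X_{0:T-n})\}$, and handle the two conditions separately. The only difference is in the second case. There you argue directly from invariance that every column of $X_{0:T-n}$ lies in $\mathcal{S}(\tilde{W})$, whereas the paper reduces to $X_{0:n-1}$ via Lemma~\ref{le:condition} and cites an external if-and-only-if result; your version is self-contained and correctly shows that $T\ge nm+n-1$ is not needed for this direction.
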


\begin{proof}
First, notice that the $(t+1)$-th column of $\tilde{Y}_{0:T-n}$ can be equivalently written as $\tilde y_{t:t+n-1}= Q_o x_t$. 
Hence, $\tilde Y_{0:T-n}$ is represented by 
\begin{equation} \label{eq:product_hankel}
  \tilde Y_{0:T-n}= Q_o X_{0:T-n},
\end{equation}
where $X_{0:T-n}=[x_0 , x_1 , \cdots, x_{T-n}]$. 
By rank inequality for matrix multiplication, we have 
\begin{align}
  \operatorname{rank}(\tilde Y_{0:T-n}) &= \operatorname{rank}(Q_o X_{0:T-n})  \nonumber \\
  &\le \min\{\operatorname{rank}(Q_o), \operatorname{rank}(X_{0:T-n}) \},
\end{align}
which indicates there are two cases that guarantee $\operatorname{rank}(\tilde Y_{0:T-n}) < n$. 
For the first case, it is well-known that if $Q_o v=0$ holds for a nonzero $v$, then $\operatorname{rank}(Q_o)<n$. 

As for the other case, recall from Lemma \ref{le:condition} that $\operatorname{rank}(X_{0:T-n}) =\operatorname{rank}(X_{0:n-1}) $ when $T\ge nm+n-1$. 
By referring to \cite[Lemma 3]{yushan2025Resistant}, 
we have
\begin{align}\label{eq:rank_XT}
    \operatorname{rank}(X_{0:n-1}) = n~\text{if and only if}~x_{0}\notin\mathcal{S}(\tilde{W}).
\end{align}
Then, it is straightforward that 
when $x_{0}\in\mathcal{S}(\tilde{W})$, we have $\operatorname{rank}(X_{0:T-n})<n$, and thus $ \operatorname{rank}(\tilde Y_{0:T-n})<n$. 

\end{proof}

Lemma \ref{le:possibility} shows two different ideas to enable $\tilde Y_{p:T-n}$ rank-deficient by designing $K$: 
i) making the modified NS unobservable, 
and ii) enforcing $x_0$ lying in the subspace $S(\tilde{W})$. 
Note that if $x_0 \in \mathcal{S}(W)$, it means that the topology is already unidentifiable from the data in the original dynamics \eqref{eq:local_model}. 
Therefore, the second design idea only focuses on how to render $x_0 \in \mathcal{S}(\tilde{W})$ when $x_0 \notin \mathcal{S}(W)$. 
We will separately present the detailed design of two ideas in the next subsections. 

\subsection{Solve \textbf{$P_a$} by Making the NS Unobservable}\label{subsec:unobservable}
In this part, we show how to ensure the feedback $K$ satisfy all constraints in \textbf{$P_a$} by making the NS unobservable. 

\begin{theorem}\label{th:K_existence}
A non-trivial solution to \textbf{$P_a$} can always be found, if the following condition holds
\begin{align}\label{eq:explicit_con}
\left\{ 
\begin{aligned}
& \operatorname{ker}(W) \cap \operatorname{ker}(C) \neq  \{0\} \\
& |\mathcal{Z}| > 3n-1
\end{aligned}
\right.,
\end{align}
where $\mathcal{Z}=\{W_{ij}: W_{ij} \neq 0, i,j\in\mathcal{V}\}$ and $|\mathcal{Z}|$ is the cardinality of $\mathcal{Z}$. 
\end{theorem}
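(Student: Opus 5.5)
The plan is to realize the first condition of Lemma~\ref{le:possibility}, i.e., to make the modified NS unobservable along a fixed direction. Pick a nonzero $v\in\operatorname{ker}(W)\cap\operatorname{ker}(C)$, which exists by the first line of \eqref{eq:explicit_con}. If $K$ additionally satisfies $Kv=0$, then $\tilde W v=(W+K)v=0$. Hence $Cv=0$ and $C\tilde W^{k}v=0$ for all $k\ge 1$, so $Q_o v=0$. Lemma~\ref{le:possibility} then gives $\operatorname{rank}(\tilde Y_{0:T-n})<n$, i.e., \eqref{eq:require_rank} holds regardless of $x_0$. The task therefore reduces to finding a nontrivial $K$ that satisfies \eqref{eq:converge_state}, \eqref{eq:row_stochastic}, \eqref{eq:K_sparse} together with the extra linear constraint $Kv=0$.

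First step: a dimension count for the linear constraints. I restrict $K$ to be supported exactly on the nonzero pattern of $W$, reading $|\mathcal{Z}|$ as the number of nonzero entries of $W$. This enforces \eqref{eq:K_sparse} and also avoids creating negative diagonal entries where $W_{ii}=0$. The unknowns are then $|\mathcal{Z}|$ scalars. The homogeneous constraints are $K\bm 1=0$ ($n$ equations), $\pi^\intercal K=0$ ($n$ equations, from Lemma~\ref{th:eigenvalue}) and $Kv=0$ ($n$ equations). Among these there is at least one redundancy, since $\pi^\intercal(K\bm 1)=(\pi^\intercal K)\bm 1$. So at most $3n-1$ of the equations are independent. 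Because $|\mathcal{Z}|>3n-1$, the solution space $\mathcal{K}_v$ of this homogeneous system has positive dimension, and a nonzero $K_0\in\mathcal{K}_v\subseteq\mathcal{K}$ exists.

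Second step: mimic Lemma~\ref{lemma:constraint} by scaling, $K=\varepsilon K_0$ with $\varepsilon>0$ small. All linear constraints, including $Kv=0$ and the sparsity pattern, are invariant under scaling. Nonnegativity \eqref{eq:row_stochastic} holds for small $\varepsilon$, since $K$ only touches entries where $W_{ij}>0$. For \eqref{eq:K_c3}, note that $K\bm 1=0$ keeps $1$ as an eigenvalue of $\tilde W$. Under Assumption~\ref{assu:topo} together with the convergence in \eqref{eq:consensus_p}, the remaining eigenvalues of $W$ lie strictly inside the unit disc. By continuity of eigenvalues \cite[Ch 2.1.8]{kato2013perturbation}, they stay inside for $\varepsilon$ small, and the eigenvalue $1$ remains simple. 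Lemma~\ref{th:eigenvalue} then yields \eqref{eq:converge_state}, and $K\neq 0$ ensures the solution is non-trivial.

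The main obstacle is this perturbation step, specifically controlling the eigenvalues uniformly while the constraint $Kv=0$ forces $0$ to remain an eigenvalue of $\tilde W$. That is harmless, but it has to be checked that the continuity argument of Lemma~\ref{lemma:constraint} carries over when restricted to the subspace $\mathcal{K}_v$. I would handle this by redoing that argument verbatim on $\mathcal{K}_v$: the argument only uses linearity of the constraint set and continuity of the spectrum in $\varepsilon$, both of which remain valid on $\mathcal{K}_v$.
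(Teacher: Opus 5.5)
Your proposal is correct and follows the paper's proof essentially step for step. You choose $v\in\operatorname{ker}(W)\cap\operatorname{ker}(C)$ and impose $Kv=0$ so that $Q_o v=0$, then count at most $3n-1$ independent homogeneous constraints against the $|\mathcal{Z}|$ free entries of $K$, and finish with the continuity argument behind Lemma~\ref{lemma:constraint}. Your last step is in fact slightly more careful than the paper's: the explicit scaling $K=\varepsilon K_0$ inside the constrained solution space makes clear that $Kv=0$ and the sparsity pattern survive, which the paper's direct appeal to Lemma~\ref{lemma:constraint} leaves implicit.
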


\begin{proof}
The proof consists of two steps: 
i) we exploit \eqref{eq:le_sufficient1} in Lemma \ref{le:possibility} to find a $K$ satisfying $Q_o v=0$ for some $v$, such that $\operatorname{rank}(\tilde{Y}_{0:T-n})<n$ in \textbf{$P_a$} is satisfied;  
ii) we construct a group of homogeneous equations about $K$ to ensure the satisfaction of the rest constraints in \textbf{$P_a$}.

First, notice that the equation $Q_o v=0$ is equivalent to that all the vectors $\{v, \tilde{W}v, \cdots, \tilde{W}^{T-1}v\}$ lying in $\operatorname{ker}(C)$. 
Let us pick a nonzero $v\in\operatorname{ker}(W) \cap \operatorname{ker}(C)$, whose existence is guanateed by \eqref{eq:explicit_con}, and we have 
\begin{equation}\label{eq:Cv}
Cv=0,\quad Wv=0. 
\end{equation}
We then pick a $K$ such that $Kv = 0$, which implies that 
\begin{equation}
\tilde{W} v=(K+W)v=Kv+Wv=0
\end{equation}
and leads to the remaining vectors $\{(W+K)^t v, t=1,\cdots,T-1 \}$ being all zeros. 
Hence, $Q_o v=0$ is guaranteed and the matrix $\tilde{Y}_{0:T-n}$ will have rank one, satisfying the constraint $\operatorname{rank}(\tilde{Y}_{0:T-n})<n$.

Next, to analyze the solutions to \textbf{$P_a$}, we express the equation $Kv = 0$ and the rest constraints in \textbf{$P_a$} as a linear system of equation
\begin{equation}\label{eq:equations_K}
\left\{
\begin{aligned}
K \bm{1}&=0, \pi^{\intercal} K= 0 \\
K_{ij}&=0,~\text{if}~W_{ij}~\notin \mathcal{Z}\\
Kv&=0
\end{aligned} 
\right.,
\end{equation}
and turn to count the numbers of free variables and independent equations in \eqref{eq:equations_K}. 
From the first equality in \eqref{eq:equations_K}, we have
\begin{equation}\label{eq:Keq}
  \sum_{j=1}^n K_{ij}=0,~i=1,\cdots,n.
\end{equation}
Multiplying the left hand side of \eqref{eq:Keq} by $\pi_i$ and summing over $i$, we have 
\begin{equation}
  \sum_{i=1}^n \pi_i \sum_{j=1}^n K_{ij}=\sum_{i=1}^n \sum_{j=1}^n \pi_i  K_{ij} = \sum_{j=1}^n  (\pi^{\intercal}K)_{j}=0,
\end{equation}
which is exactly a linear combination of the equality constraint $\pi^{\intercal} K = 0$. 
Therefore, the total number of independent linear constraints in the first row of \eqref{eq:equations_K} is at most $2n-1$. 
In the second row of \eqref{eq:equations_K}, there are $n^2-|\mathcal{Z}|$ zero elements in $W$, and it further adds $n^2-|\mathcal{Z}|$ groups of independent equations for $K$. 
The last row of \eqref{eq:equations_K} also contains at most $n$ groups of independent equations. 
Since $K$ has $n^2$ degrees of freedom when without any constraints, by conservatively treating all the above equations for $K$ as independent, a feasible $K$ always exists if the minimal degrees of freedom for the solution to \eqref{eq:equations_K} is larger than zero, i.e.,
\begin{equation}\label{eq:number_solution}
  n^2-(3n-1)-(n^2-|\mathcal{Z}|)>0 \Rightarrow  |\mathcal{Z}|> 3n-1. 
\end{equation}

Finally, since all equations in \eqref{eq:equations_K} are homogenous, according to Lemma \ref{lemma:constraint}, once a nontrivial solution to \eqref{eq:equations_K} is found, 
a feasible solution \textbf{$P_a$} always exists. 
The proof is completed. 
\end{proof}

The proof of Theorem \ref{th:K_existence} is conducted by analyzing solutions to the homogeneous equation \eqref{eq:equations_K}, 
where the second row in \eqref{eq:explicit_con} is a conservative condition requiring that the minimal number of non-zero edges in $W$ is larger than $(3n-1)$. 
In practice, whether $ \operatorname{ker}(W) \cap \operatorname{ker}(C)\neq \{0\}$ holds can be verified by solving the equation $[W^\intercal,~ C^\intercal]v=0 $. 
If the condition \eqref{eq:explicit_con} is satisfied, we can directly solve \eqref{eq:equations_K} to obtain a feasible $K$, which can give multiple solutions. 
For general case where only $ \operatorname{ker}(C) \neq  \{0\}$ holds, we can first select a nonzero $v \in \operatorname{ker}(C) $ and turn to solve the linear feasibility problem:
\begin{equation}\label{eq:equations_K2}
\left\{
\begin{aligned}
&K \bm{1}=0, ~ \pi^{\intercal} K= 0 \\
&K_{ij}=0,~\text{if}~W_{ij}~\in \mathcal{Z} \\
&Kv=-Wv\\
&W_{ij} + K_{ij} \ge 0
\end{aligned} 
\right..
\end{equation}
The solution to \eqref{eq:equations_K2} will automatically satisfy all constraints in \textbf{$P_a$}. 
The above analysis shows that when the matrix $C$ of the NS has no full rank, it provides great potentials to preserve the topology privacy.

\subsection{Solve \textbf{$P_a$} by Enforcing $x_0 \in \mathcal{S}(\tilde{W})$}\label{subsec:enforcing_x0}

This part demonstrates how to enforce $K$ to respect the invariant subspace condition \eqref{eq:state_space}. 
The conclusions in this subsection also apply to the case when $C=I$. 


First, note that enforcing $x_{0}\in\mathcal{S}(\tilde{W})$ is equivalent to enforcing $\operatorname{rank}(X_{0:n-1}) <n$. 
Algebraically, this rank deficiency can be translated via Cayley-Hamilton theorem \cite[Th. 2.4.3.2]{horn2012matrix} into the existence of a nontrivial matrix polynomial $P(\tilde W)=0$, which introduces additional free polynomial coefficients besides the unknown $K$. 
Furthermore, the row-stochastic condition $\tilde W\mathbf 1=\mathbf 1$ couples these coefficients through an extra linear relation. 
As a result, each entry of $P(\tilde W)$ becomes a high-order multivariate polynomial in the entries of $K$.  
Therefore, solving this highly nonlinear problem does not as such lead to a general tractable method for finding an appropriate $K$.



However, in some special cases, we can design the feedback $K$ in an explicit way. 
For simplicity of illustration, we first that consider that $W$ is diagonalizable and provide extension to non-diagonalizable case  later. 
Then, the eigendecomposition of diagonalizable $W$ is given by 
\begin{equation}\label{eq:decomposition}
W=M\operatorname{diag}\{\lambda_1,\cdots,\lambda_n\} M^{-1}=\sum\nolimits_{i = 1}^n {{p_i}q_i^{\intercal}{\lambda _i}},
\end{equation}
where $\lambda_i$ is the $i$-th eigenvalue of $W$, $M=[p_1,\cdots,p_n]$ and $(M^{-1})^{\intercal}=[q_1,\cdots,q_n]$ are the transformation matrices consisting of the right and left eigenvectors of $W$, respectively. 
The eigenvectors satisfy
\begin{equation}
  {q_i^{\intercal}}{p_i}=1,~{q_i^{\intercal}} p_j =0~(i\ne j).
\end{equation}
Specifically, $q_1=\pi$ and $p_1=\bm{1}$ under Assumption \ref{assu:topo}. 
Then, the initial state $x_0$ can be expanded in the eigenvectors $\{p_1,\cdots, p_n\}$, given by 
\begin{equation}\label{expansion:x_0}
  x_0= \sum\nolimits_{i=1}^n \beta_i p_i, 
\end{equation}
where $\beta_1=q_1^{\intercal} x_0$ is exactly the original consensus value. 
Before presenting the next result, we define an index set $\mathcal{I}= \{1,2,\cdots, n\}$, and denote $\operatorname{supp}(\cdot)$ as the index set of nonzero elements of a matrix.

\begin{theorem}\label{th:construct_K}
If there exists a subset $\mathcal{I}_s \subseteq \mathcal{I}\setminus \{1\}$ such that 
\begin{equation}\label{eq:cond_eig}
 |\mathcal{I}_s|\ge 2,\quad  \operatorname{supp}\left\{ \sum\nolimits_{i \in \mathcal{I}_s} \lambda_i p_i q_i^{\intercal}  \right\} \subseteq \operatorname{supp}\{ W\},
\end{equation}
then the following constructed $K$ satisfies the constraints of convergence \eqref{eq:converge_state}, sparsity \eqref{eq:K_sparse}, and rank \eqref{eq:require_rank}:  
\begin{align}\label{eq:K_construct}
K_c=\sum\nolimits_{i \in \mathcal{I}_s} -\lambda_i p_i q_i^{\intercal}. 
\end{align}
Moreover, if $W+K_c\ge0$ holds element-wisely, then $K_c$ is a solution to \textbf{$P_a$}. 
\end{theorem}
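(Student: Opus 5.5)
Our plan is to compute $\tilde W = W+K_c$ explicitly through the eigendecomposition \eqref{eq:decomposition} and then verify each constraint in turn. Substituting \eqref{eq:K_construct} gives
\begin{equation*}
\tilde W=\sum\nolimits_{i\notin\mathcal{I}_s} \lambda_i p_i q_i^{\intercal}
= M\operatorname{diag}\{\tilde\lambda_1,\cdots,\tilde\lambda_n\}M^{-1},
\end{equation*}
with $\tilde\lambda_i=\lambda_i$ for $i\notin\mathcal{I}_s$ and $\tilde\lambda_i=0$ for $i\in\mathcal{I}_s$. Thus $\tilde W$ is diagonalizable with the same eigenvectors as $W$, and the eigenvalues indexed by $\mathcal{I}_s$ are moved to zero. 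Since $1\notin\mathcal{I}_s$, the dominant eigenpair $(1,\bm{1},\pi)$ is preserved.

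\textbf{Convergence.} We will check the three conditions of Lemma~\ref{th:eigenvalue}. Using $p_1=\bm{1}$ and $q_1=\pi$ together with the biorthogonality $q_i^{\intercal}p_j=0$ for $i\neq j$, every term $p_iq_i^{\intercal}$ with $i\neq 1$ annihilates $\bm{1}$ on the right and $\pi$ on the left. Hence $K_c\bm{1}=0$ and $\pi^{\intercal}K_c=0$, which are \eqref{eq:K_c1}--\eqref{eq:K_c2}. The remaining eigenvalues of $\tilde W$ are either $0$ or $\lambda_i$ with $i\ge 2$. Under Assumption~\ref{assu:topo}, $|\lambda_i|<1$ for $i\ge2$, so \eqref{eq:K_c3} holds. (Strictly this needs primitivity/aperiodicity, which we take as implicit in the consensus result \eqref{eq:consensus_p}.)

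\textbf{Sparsity.} Condition \eqref{eq:cond_eig} gives directly $\operatorname{supp}(K_c)\subseteq\operatorname{supp}(W)$, so \eqref{eq:K_sparse} holds. Diagonal entries are unconstrained anyway.

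\textbf{Rank.} This step is where the hypothesis $|\mathcal{I}_s|\ge2$ is needed, and it is the main point to get right. We will show $\operatorname{rank}(\tilde W)\le n-|\mathcal{I}_s|\le n-2$. Then $x_1=\tilde Wx_0\in\operatorname{range}(\tilde W)$, and every subsequent state also lies in $\operatorname{range}(\tilde W)$, which is $\tilde W$-invariant. Therefore
\begin{equation*}
\operatorname{span}\{x_0,x_1,\dots,x_{n-1}\}\subseteq \operatorname{span}\{x_0\}+\operatorname{range}(\tilde W),
\end{equation*}
which has dimension at most $1+(n-2)=n-1<n$. In other words, $\operatorname{rank}(X_{0:n-1})<n$.

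For the alternative invariant-subspace phrasing, the subspace $\operatorname{span}\{p_i: i\notin\mathcal{I}_s\}\oplus\operatorname{span}\{\sum_{i\in\mathcal{I}_s}\beta_ip_i\}$ is $\tilde W$-invariant, contains $x_0$ by \eqref{expansion:x_0}, and has dimension at most $n-1$. So $x_0\in\mathcal{S}(\tilde W)$, and the second bullet of Lemma~\ref{le:possibility} gives \eqref{eq:require_rank}. This also explains why $|\mathcal{I}_s|=1$ would not suffice: one zeroed direction could still be reached through $x_0$ itself.

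\textbf{Solution to $P_a$.} If in addition $W+K_c\ge0$ element-wise, then \eqref{eq:row_stochastic} holds as well. Combined with the three constraints above, $K_c$ is a solution to \textbf{$P_a$}.

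Two subtleties remain. First, complex eigenpairs: if $\mathcal{I}_s$ contains a complex $\lambda_i$, it should also contain its conjugate, so that $K_c$ is real. We will note this as implicit in requiring the support condition \eqref{eq:cond_eig} for a real matrix. Second, $K_c$ must be nontrivial, which holds whenever some $\lambda_i\neq0$ for $i\in\mathcal{I}_s$.
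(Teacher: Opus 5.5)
Your proof is correct and follows the paper's argument. You verify $K_c\bm{1}=0$ and $\pi^\intercal K_c=0$ by biorthogonality, note that $\tilde W$ keeps the eigenvectors of $W$ with the $\mathcal{I}_s$-eigenvalues set to zero, and bound $\operatorname{rank}[x_0,\dots,x_{n-1}]\le 1+(n-2)$ because every $x_t$ with $t\ge1$ lies in a subspace of dimension at most $n-2$ (your $\operatorname{range}(\tilde W)$, the paper's $\mathcal{S}_p=\operatorname{span}\{p_i: i\notin\mathcal{I}_s\}$); beyond the paper, you also make explicit that the modulus condition needs aperiodicity and that $\mathcal{I}_s$ must contain conjugate pairs for $K_c$ to be real.
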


\begin{proof}
In this proof, we exploit \eqref{eq:state_space} in Lemma \ref{le:possibility} to construct the feedback as \eqref{eq:K_construct}, and show that it satisfies the three constraints \eqref{eq:converge_state}, \eqref{eq:K_sparse}, and \eqref{eq:require_rank} in \textbf{$P_a$}.
First, when $\operatorname{supp}\left\{ \sum\nolimits_{i \in \mathcal{I}_s} \lambda_i p_i q_i^{\intercal}  \right\} \subseteq \operatorname{supp}\{ W\}$ holds, it implies that 
\begin{equation}
  W_{ij} \neq 0,~\forall \{i,j\}\in \operatorname{supp}\left\{ \sum\nolimits_{i \in \mathcal{I}_s} \lambda_i p_i q_i^{\intercal}  \right\}. 
\end{equation}
If $\left [\sum\nolimits_{i \in \mathcal{I}_s} \lambda_i p_i q_i^{\intercal} \right ]_{ij} = 0$, it does not matter whether $W_{ij}$ is zero. 
Hence, the local communication requirement \eqref{eq:K_sparse} is satisfied. 
For the convergence constraint \eqref{eq:converge_state}, we just need to focus on the equalities \eqref{eq:K_c1} and \eqref{eq:K_c2} by Lemma \ref{lemma:constraint}. 
Then, the construction of \eqref{eq:K_construct} satisfies that 
\begin{align}
&K_c\bm{1}= \sum\nolimits_{i \in \mathcal{I}_s} -\lambda_i p_i (q_i^{\intercal} \bm{1}) = 0,  \\
&q_1^{\intercal}K_c= \sum\nolimits_{i \in \mathcal{I}_s} - \lambda_i (q_1^{\intercal} p_i) q_i^{\intercal} = 0,
\end{align}
where we use $ q_i^{\intercal} \bm{1}=0 $ and $q_1^{\intercal} p_i=0$ for all $i\in \mathcal{I}_s $. 
Meanwhile, by multiplying $(W+K_c) $ with the eigenvectors $\{p_1,p_2,\cdots, p_n\}$ of $W$,
we have 
\begin{equation}\label{eq:Wp_property}
(W+K_c) p_j\!=W p_j - \!\! \sum_{i \in \mathcal{I}_s} \lambda_i p_i (q_i^{\intercal} p_j)\!=\left\{
\begin{aligned}
&0, &&\text{if}~j\in \mathcal{I}_s \\
&\lambda_j p_j, &&\text{otherwise} 
\end{aligned}
\right. . 
\end{equation}
It is clear that $(W+K_c)$ has a unique eigenvalue $1$ corresponding to eigenvector $\bm{1}$, and its other eigenvalues are either the same as those of $W$ or zero. 

Next, for the rank-deficient condition \eqref{eq:require_rank}, we will first prove that $x_t \in \mathcal{S}_p,~\forall t\ge1$ and then show $\operatorname{rank}[x_0,x_1,\cdots,x_{n-1}]\le n-1$ when $|\mathcal{I}_s|\ge 2$. 
Denote $\mathcal{S}_p\!=\operatorname{span}\{p_i, i\!\in\! \mathcal{I}\setminus \mathcal{I}_s \}$, and it follows from \eqref{expansion:x_0} that 
\begin{align}
x_1&=(W+K_c) x_0 = \sum\nolimits_{i\in \mathcal{I}} (W+K_c) \beta_i p_i  \nonumber \\
   &= \sum\nolimits_{i\in \mathcal{I}\setminus \mathcal{I}_s }\beta_i \lambda_i p_i ~\in \mathcal{S}_p,
\end{align}
where the last row utilizes the property \eqref{eq:Wp_property}. 
Similarly, the following state sequences satisfy
\begin{align}
&x_2= (W+K_c) x_1 = \!\! \sum_{i\in \mathcal{I}\setminus \mathcal{I}_s } \left( W+K_c \right) \beta_i \lambda_i p_i = \!\! \sum_{i\in \mathcal{I}\setminus \mathcal{I}_s } \beta_i \lambda_i^2 p_i  \nonumber \\ 
 &\Rightarrow ~~  x_t= \sum_{i\in \mathcal{I}\setminus \mathcal{I}_s }  \beta_i \lambda_i^t p_i \in \mathcal{S}_p,~t\ge1. 
\end{align}
Notice that when $|\mathcal{I}_s|\ge 2$, $\operatorname{dim}(\mathcal{S}_p)\le n-2$ by definition. 
Considering that $x_0$ is not in the subspace $\mathcal{S}_p$, we further have
\begin{align}
	\operatorname{rank}[x_0,x_1,\cdots,x_{n-1}] &\le \operatorname{dim}( \operatorname{span}\{x_0\}+\mathcal{S}_p) \nonumber \\
	&\le 1+\operatorname{dim}(\mathcal{S}_p) \le n-1. 
\end{align}
In summary, the constructed $K_c$ satisfies the constraints \eqref{eq:converge_state}, \eqref{eq:K_sparse}, and \eqref{eq:require_rank}, and it is a solution to \textbf{$P_a$} if $W+K\ge0$ further holds. 
The proof is completed. 
\end{proof}

Compared with directly solving a high-dimensional nonlinear matrix equation, 
Theorem \ref{th:construct_K} provides an explicit spectrum-driven construction for $K$. 
The key point lies in removing a small number of non-consensus eigenmodes $\{ \lambda_i p_i q_i^{\intercal} , i \in \mathcal{I}_s\}$, which are allowed to cancel each other on zero entries of $W$, and thus guaranteeing the NS to evolve in the proper invariant subspace $\mathcal{S}_p$. 
Denote $ H_{\ell i}= \lambda_i p_i(r_\ell) q_i(c_\ell)$ given the $(r_\ell, c_\ell)$-th zero entry of $W$, and construct $H=[H_{\ell i}]_{\ell=1,\cdots, |\mathcal{Z}| }^{i=1,\cdots,n}$. 
Then, finding the subset $\mathcal{I}_s$ is equivalent to finding a vector $v_s\in\{0,1\}^n$ satisfying $H v_s = 0$, 
which is a typical $0$-$1$ linear feasibility problem.  

Note that the nonnegativity constraint \eqref{eq:row_stochastic} is not enforced in the constructed $K_c$.
Instead, this explicit design allows the key structural properties (namely consensus preservation, rank deficiency, and sparsity) to be established analytically, 
and leaves \eqref{eq:row_stochastic} to a simple \emph{a posteriori} verification or refinement. 
If $W+K_c\ge 0$ is not satisfied, we can further seek an additive refinement $\Delta K$ from the following linear feasibility problem:
\begin{subequations}\label{eq:deltaK_feas}
\vspace{-10pt}
\begin{align}
\text{find}~~ &\Delta K \\
\text{s.t.}~~& W+K_c+\Delta K \ge 0 , \\
& \Delta K\bm{1}=0,~\pi^\intercal \Delta K=0, \label{eq:f_con} \\
& \operatorname{supp}(\Delta K)\subseteq \operatorname{supp}(W), \label{eq:f_spa}\\
& \Delta K p_i = 0,\quad \forall i\in\mathcal{I}_s, \label{eq:f_rank}
\end{align}
\end{subequations}
where \eqref{eq:f_con} ensures the consensus convergence and \eqref{eq:f_spa} maintains the sparsity pattern. 
More importantly, \eqref{eq:f_rank} prevents introducing the eigenmodes removed by $K_c$, and thus preserves the same invariant subspace $\mathcal{S}_p$ responsible for the rank-deficiency guarantee. 
Consequently, when \eqref{eq:deltaK_feas} is feasible, any feasible solution yields a final matrix $K=K_c+\Delta K$
that satisfies all constraints in \textbf{$P_a$}. 
If $W$ is not diagonalizable, one can still work with its generalized eigenvectors. 
The details are limited here due to the space limit.

\subsection{Conditions for Topology Inaccuracy in \textbf{$P_b$}}\label{subsec:inaccurate}

In this subsection, we briefly analyze the fully observed case (i.e., $C=I$).
While the previous subsection preserves topology privacy by making $X_{0:T-1}$ rank-deficient, the required conditions may fail in practice. 
To address this, we provide an alternative design via \textbf{$P_b$}, which intentionally biases the topology identifiable from $\{x_t\}_{t=0}^{T}$, 
thereby serving as a last line of preservation for topology privacy.

First, the consensus-preserving constraints \eqref{eq:K_c1}--\eqref{eq:K_c2} can be written compactly in vector form as
\begin{equation}\label{eq:M_constraint}
\left\{\begin{aligned}
(\bm{1}_n^\intercal \otimes I_n )\operatorname{vec}(K)=0\\
(I_n \otimes \pi^\intercal)  \operatorname{vec}(K)=0
\end{aligned}
\right.
\Rightarrow  M \theta_K=0,
\end{equation}
where $M=[(\bm{1}_n^\intercal \otimes I_n )^\intercal, (I_n \otimes \pi^\intercal)^\intercal]^\intercal \in \mathbb{R}^{2n \times n^2}$ and $\theta_K = \operatorname{vec}(K)  \in \mathbb{R}^{n^2}$. 
Recall that $\mathcal{Z}$ is the candidate edge set in \eqref{eq:explicit_con} with $z=|\mathcal{Z}|$, and let $\tilde{\theta}_K\in\mathbb{R}^{z}$ collect the nonzero entries of $\theta_K$ allowed by $\mathcal{Z}$. 
Removing the corresponding zero columns from $M$ yields $\tilde M\in\mathbb{R}^{2n\times z}$, and we have
\begin{equation}\label{eq:new}
\tilde M\,\tilde{\theta}_K=0.
\end{equation}
We next provide existence conditions for solutions to \textbf{$P_b$}.
\begin{theorem}\label{th:exist_K}
A non-trivial solution to \textbf{$P_b$} can always be found, if the following condition holds
\begin{equation}\label{eq:condition}
\operatorname{rank}(\tilde{M})< z.  
\end{equation}
\end{theorem}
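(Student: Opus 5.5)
The plan is to mirror the proof of Theorem \ref{th:K_existence}. We first produce a non-trivial $K$ that meets the homogeneous constraints. Lemma \ref{lemma:constraint} then upgrades it to one that also meets the eigenvalue and non-negativity constraints. Finally we check that the inaccuracy requirement \eqref{eq:requirement1} holds automatically for any non-trivial feasible $K$.

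\textbf{Step 1 (non-trivial homogeneous solution).} The sparsity constraint \eqref{eq:K_sparse} forces $\theta_K$ to vanish outside the support allowed by $\mathcal{Z}$. The consensus equalities \eqref{eq:K_c1}--\eqref{eq:K_c2} are exactly \eqref{eq:M_constraint}. So the feasible set of these linear constraints is isomorphic to $\ker(\tilde M)\subseteq\mathbb{R}^{z}$. By rank--nullity, \eqref{eq:condition} gives $\dim\ker(\tilde M)=z-\operatorname{rank}(\tilde M)\ge 1$. Hence there is a nonzero $\tilde\theta_K$ solving \eqref{eq:new}. Re-embedding it into $\theta_K$ yields a non-trivial $K_0\in\mathcal{K}$ that respects \eqref{eq:K_sparse}.

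\textbf{Step 2 (convergence and non-negativity).} Apply Lemma \ref{lemma:constraint}. Since $\mathcal{K}$ has a non-trivial element, there is a non-trivial $K$ satisfying \eqref{eq:K_c3} and \eqref{eq:row_stochastic}. The intended construction is $K=\epsilon K_0$ with $\epsilon>0$ small, and I would state this explicitly. Scaling preserves \eqref{eq:K_c1}, \eqref{eq:K_c2} and the sparsity pattern, so $\epsilon K_0$ still lies in $\ker(\tilde M)$ with the right support. For non-negativity, the entries with $W_{ij}>0$ remain positive for small $\epsilon$. Diagonal entries need some care: if a diagonal entry $W_{ii}=0$, then $\mathcal{Z}$ excludes it and $K_{ii}=0$ there, so no sign issue arises. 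Continuity of eigenvalues, together with the fact that $K\bm 1=0$ and $\pi^\intercal K=0$ keep the eigenvalue $1$ simple and isolated, gives \eqref{eq:K_c3}. Lemma \ref{th:eigenvalue} then yields \eqref{eq:converge_state}.

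\textbf{Step 3 (inaccuracy).} Under $C=I$ with $X_{0:T-1}$ of full rank $n$, we have $X_{1:T}=\tilde W X_{0:T-1}$. Hence $X_{1:T}X_{0:T-1}^{\dag}=\tilde W X_{0:T-1}X_{0:T-1}^{\dag}=\tilde W$, because $X_{0:T-1}$ has full row rank and so $X_{0:T-1}X_{0:T-1}^{\dag}=I$. Since $K\neq 0$, we get $X_{1:T}X_{0:T-1}^{\dag}=W+K\neq W$, which is \eqref{eq:requirement1}.

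The main obstacle is Step 2, specifically ensuring that \eqref{eq:row_stochastic} and \eqref{eq:K_c3} hold simultaneously without leaving $\ker(\tilde M)$. Lemma \ref{lemma:constraint} is what resolves this. The scaling argument above is how I would justify invoking it here, since the support restriction is preserved under scaling.
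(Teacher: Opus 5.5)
Your proposal is correct and follows the paper's proof. A nonzero element of $\ker(\tilde M)$, guaranteed by $\operatorname{rank}(\tilde M)<z$, is re-embedded into a sparse nontrivial $K^*$ satisfying \eqref{eq:K_c1}--\eqref{eq:K_c2}, and Lemma~\ref{lemma:constraint} then yields \eqref{eq:K_c3} and \eqref{eq:row_stochastic}; in practice this is the scaling $K=\epsilon K^*$ the paper lists right after the proof. Two of your points make explicit what the paper leaves implicit: that scaling preserves the support pattern, and your Step~3 showing $X_{1:T}X_{0:T-1}^{\dag}=W+K\neq W$ from the full row rank of $X_{0:T-1}$.
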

   
\begin{proof}
First, note that the homogeneous equation \eqref{eq:new} has non-trivial solutions only when $\operatorname{rank}(\tilde{M})<z$. 
Under this condition, the augmented version of $\tilde{\theta}_{K^*}$, denoted by $\theta_{K^*}$, is obtained by inserting zeros at all positions specified by the sparsity constraint \eqref{eq:K_sparse}. 
We restore the solution matrix $K^*$ by reshaping $\theta_{K^*}$ back to its matrix form according to the vectorization rule used in \eqref{eq:M_constraint}. 
As a result, the reconstructed $K^*$ automatically meets the equality conditions \eqref{eq:K_c1} and \eqref{eq:K_c2}.  
Finally, since $K^*$ is nontrivial, it follows from Lemma \ref{lemma:constraint} that there always exists a $K$, such that $|\tilde{\lambda}_i|<1$ ($i=2,\cdots,n$) and $W_{ij}+K_{ij}\ge 0$ holds, 
which completes the proof. 
\end{proof}

Theorem \ref{th:exist_K} demonstrates the existence of a feasible feedback matrix to make the topology identifiable from the observations inaccurate with respect to the true one. 
The key lies in that the sparsity structure of $W$ can allow more free parameters than the number of independent constraints. 
A straightforward way of obtaining an appropriate $K$ for \textbf{$P_b$} is to 
\begin{itemize}
\item obtain the basis of the kernel space of $\tilde{M}$;
\item combine the basis vectors to form a candidate $K^*$ given desired criteria (e.g., maximizing $\|K^*\|_0$); 
\item scale $K\!=\!\epsilon K^* (\epsilon>0)$ such that $|\tilde{\lambda}_i|\!<\!1,i=2,\!\cdots\!,n$.
\end{itemize}



\begin{remark}
In summary, the solving manners to meet constraints in \textbf{$P_a$} and \textbf{$P_b$} rely on the knowledge of the communication topology. 
This paradigm of centrally designing the feedback $K$ while enabling distributed execution is well aligned with many practical scenarios, where a central unit computes global parameters offline or online, and disseminates parts of them to individual nodes for local coordination.  
Examples include wireless sensor networks with a sink \cite{7004042}, power networks with area control centers for distributed secondary control \cite{8892668}, and multi-robot teams with centralized task computations and decentralized updates \cite{alonso2019distributed}. 
\end{remark}

Nevertheless, if the NS strictly excludes a central unit and constrains the nodes to communicate in a totally distributed manner, 
making each node locally coordinate to find the optimal feedback $K$ without global information is significantly hard (especially when targeting at the unsolvability). 
In the next section, we will demonstrate how to achieve a distributed topology preservation design.

\section{Performance Tradeoff and Distributed Preservation Design}\label{sec:designs}

To achieve the topology privacy preservation in a tractable distributed manner, 
this section seeks for a tradeoff with the zero deviation constraint \eqref{eq:converge_state} in \textbf{$P_b$},  
and proposes a distributed design under privacy budget to enforce the topology inaccuracy from available data.

\subsection{Laplacian based Design}\label{subsec:Laplacian}

Considering that $W$ itself exhibits the desired structure for solutions to \textbf{$P_b$}, our preliminary work \cite{yushan2025Resistant} proposed a Laplacian based design 
\begin{equation}\label{eq:local_K}
K_{L,ij}=\left\{\begin{aligned}
&\alpha W_{ij},&&\text{if}~i\neq j \\
&\alpha (W_{ii}-1),&&\text{if}~i=j
\end{aligned}\right.,
\end{equation}
where $\alpha > 0$ a positive scalar gain. 
It is clear that by \eqref{eq:local_K}, node $i$ only use information from its neighbors and itself. 
In a global form, $K_L$ can be written as 
\begin{equation}\label{eq:K_design}
K_L = -\alpha (I-W),
\end{equation}
which can be regarded a scaled negative Laplacian of the graph $\mathcal{G}$. 
Specifically, it is proved in \cite{yushan2025Resistant} that all constraints in \textbf{$P_b$} can be met if the parameter $\alpha$ satisfies
\begin{equation}\label{eq:alpha_bound}
\alpha < \frac{1 - r_{\max}}{1 + r_{\max}}, 
\end{equation}
where $r_{\max}=\max\{|\lambda_i|,i=2,\cdots,n\}$. 

Despite its simple and efficient structure, the above design has a non-negligible drawback: 
the privacy of the weight signs is not preserved, and the off-diagonal weights are identifiable from the data up to a scalar ambiguity (i.e., the $\alpha$ in \eqref{eq:local_K}). 
Such information is very sensitive in many applications and should be taken into account for the preservation design.


\subsection{The Proposed Distributed Topology Preservation Method}

Recall that Assumption \ref{assu:topo} specifies the strong connectivity of the NS to ensure the consensus. 
Before we show how this assumption can be relaxed for our method, 
we refer to \cite[Chap. 3]{FB-LNS} and introduce the following graph definitions needed in this section. 
A directed graph is periodic if the greatest common divisor of the lengths of all its simple cycles is larger than one. 
Any strongly-connected digraph with a self-loop is aperiodic. 
We then present the following result. 

\begin{lemma}[Theorem 5.1 in \cite{FB-LNS}]\label{le:graphical}
For a row-stochastic matrix $\tilde{W}$ and its associated digragh $\tilde{\mathcal{G}}$, 
if $\tilde{\mathcal{G}}$ contains a root node and the subgraph of root nodes is aperiodic, then the NS under the dynamics \eqref{eq:local_feedback} can reach to the consensus point 
\begin{align}
    \lim_{t\to\infty} x_t = (\tilde{\pi}^\intercal x_0) \bm{1},
\end{align}
where $\tilde{\pi}$ is the left-dominant eigenvector of $\tilde{W}$. 
\end{lemma}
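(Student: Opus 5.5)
The plan is to follow the standard Perron--Frobenius route for row-stochastic matrices, cited as \cite[Theorem 5.1]{FB-LNS}. The argument has three parts: show that $\tilde W$ is semi-convergent with a simple eigenvalue $1$, identify the limit matrix as $\bm{1}\tilde\pi^\intercal$, and read off the consensus value.

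\textbf{Step 1 (reduction to the root subgraph).} Let $\mathcal{R}$ be the set of root nodes, meaning the nodes from which every other node is reachable in $\tilde{\mathcal G}$. The first thing to verify is that $\mathcal R$ induces a strongly connected subgraph: any two roots reach each other. Next, order the nodes as roots first and non-roots second. No root receives information from a non-root, since otherwise that non-root would reach every node through the root and would itself be a root. Hence $\tilde W$ is block lower-triangular:
\[
\tilde W=\begin{bmatrix}\tilde W_{\mathcal R\mathcal R}&0\\ \tilde W_{\bar{\mathcal R}\mathcal R}&\tilde W_{\bar{\mathcal R}\bar{\mathcal R}}\end{bmatrix}.
\]
The block $\tilde W_{\mathcal R\mathcal R}$ is row-stochastic, irreducible, and, by the aperiodicity assumption, primitive.

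\textbf{Step 2 (spectral properties).} Primitivity of $\tilde W_{\mathcal R\mathcal R}$ gives, via Perron--Frobenius, that $1$ is a simple eigenvalue and all its other eigenvalues lie strictly inside the unit disk. For the non-root block, each non-root node has a directed path from some root, so $\tilde W_{\bar{\mathcal R}\bar{\mathcal R}}$ is substochastic and every node in it has a path to a row with strict deficiency (a row that places positive weight on a root). A standard argument then shows that a power of this block has all row sums strictly below $1$, so its spectral radius is below $1$. I expect this step to be the main technical point. The plan is to bound $\|\tilde W_{\bar{\mathcal R}\bar{\mathcal R}}^{k}\bm 1\|_\infty<1$ for $k=|\bar{\mathcal R}|$ by induction on path length to the roots.

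\textbf{Step 3 (limit).} Together, Steps 1 and 2 show that $1$ is a simple eigenvalue of $\tilde W$ with right eigenvector $\bm 1$, and that all other eigenvalues have modulus less than $1$. The Jordan form therefore gives $\tilde W^t\to \bm 1\tilde\pi^\intercal$, where $\tilde\pi$ is the left eigenvector normalized by $\tilde\pi^\intercal\bm 1=1$; it is nonnegative and supported on $\mathcal R$. Consequently $x_t=\tilde W^t x_0\to(\tilde\pi^\intercal x_0)\bm 1$, which is the claimed consensus point.
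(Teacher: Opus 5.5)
Your argument is correct and is essentially the standard proof of the cited result \cite[Theorem 5.1]{FB-LNS}, which the paper invokes without reproducing. That proof orders the root (globally reachable) component first so that $\tilde W$ is block lower-triangular, uses primitivity of the root block and convergence of the substochastic non-root block to get a simple eigenvalue $1$ with all other eigenvalues strictly inside the unit disk, and concludes $\tilde W^t\to\bm 1\tilde\pi^\intercal$. One small tightening: ``any two roots reach each other'' alone does not give strong connectivity of the subgraph induced by $\mathcal R$, but your own observation that no root receives from a non-root does, since it forces every path between two roots to stay inside $\mathcal R$.
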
 

Based on Lemma \ref{le:graphical}, our key idea is to modify the local topology such that the global topology satisfies the graphical condition in Lemma \ref{le:graphical}. 
However, note that when there is no special design on the left-dominant eigenvector of $\tilde{W}$, modifying the topology can easily cause a deviation between $\tilde{\pi}$ and $\pi$ (consequently $x_{\infty}$ and $x^*_{\infty}$). 
Specifically, the deviation $\| \pi - \tilde{\pi} \|_1$ can be upper bounded by \cite{cho2001comparison}
\begin{align}\label{eq:tradeoff}
\| \pi - \tilde{\pi} \|_1 \le \| W^{\#}\|_{\infty}\|K\|_{\infty},
\end{align}
where $W^{\#}$ is the group inverse\footnote{The group inverse of matrix $W$ is the unique square matrix $W^{\#}$ satisfying $W W^{\#} W = W$, $W^{\#} W W^{\#}= W^{\#}$, and  $W^{\#} W = W W^{\#}$.} of $W$. 
It is clear from \eqref{eq:tradeoff} that $\| \pi - \tilde{\pi} \|_1$ is upper bounded by a term determined by $\|K\|_{\infty}$. 
This point motivates us to control $\| \pi - \tilde{\pi} \|_1$ by constraining the rows of $K$ separately. 
Then, we introduce a privacy budget $\tau$ for modifying each row, described by
\begin{align}\label{eq:budget_K}
\|K_{[i,:]}\|_1=&\sum\nolimits_{j=1}^n |K_{ij}| <\tau,~i=1,\cdots,n.
\end{align}
As long as \eqref{eq:budget_K} holds, it ensures that $\| \pi - \tilde{\pi} \|_1 \le \| W^{\#}\|_{\infty} \tau $. 

Based on the above formulation, our goal is to hide as many existing edges as possible while respecting the constraint: 
\begin{align}\label{eq:common_K}
K_{ij}=0~\text{if}~W_{ij}=0,\quad \sum\nolimits_{j=1}^n K_{ij}=0.
\end{align}
We next present a maximum-beacon based method.

\noindent\textbf{Step 1}: Each node maintains a beacon variable $b_i$ and a depth variable $d_i$, initialized by 
\begin{equation}
  b_i^{(0)}=\xi_i,~d_i^{(0)}=0,
\end{equation}
where $\xi_i$ is an independent random variable drawn from a continuous distribution. 
This initialization ensures that the nodes do not have the same initial
beacon almost surely, i.e., 
\begin{align}
\Pr\left( \exists i\neq j: b_i(0)=b_j(0), ~i,j\in\mathcal{V} \right)=0.
\end{align}
\noindent\textbf{Step 2}: Synchronously update the $\{x_i,b_i,d_i\}$ at iteration $t=0,1,\cdots,n-2$, by 
\begin{align}
\!\!x^i_{t+1}&\!=\sum_{j=1}^n W_{ij} x^j_{t}, ~b_i^{(t+1)} \!=\! \max\!\Big\{ b_i^{(t)}, \max_{j\in\mathcal{N}_i} b_j^{(t)} \Big\}, \label{eq:bupdate}\\
\!\! d_i^{(t+1)} &\!\!= \!
\begin{cases}
0, \quad \quad\quad \quad\quad\quad\quad\quad\quad\quad\text{if } b_i^{(t+1)}=\xi_i\\
1+\!\min \{d_j^{(t)}\!:\! j\!\in\!\mathcal{N}_i, b_j^{(t)}\!=\!b_i^{(t\!+\!1)}  \},  \text{otherwise}
\end{cases}\!\!\!\!\!\!.
\label{eq:d-update}
\end{align}

\noindent\textbf{Step 3}: At iteration $n-1$, set the local feedback for nodes. 
\begin{itemize}
\item If $b_i^{(n-1)}=\xi_i$ (new root node), select node $i$ itself and a parent node $j_0\in \arg\max_{ j\in\mathcal{N}_i } d_j^{(n-1)}$ 
and design $K_{[i,:]}$ by solving the following problem
\begin{subequations}\label{eq:root}
\begin{align}
\min_{K_{[i,:]}} &~~\| W_{[i,:]}+K_{[i,:]} \|_0 \\
\text{s.t.} ~~& \eqref{eq:common_K}~\text{and}~\eqref{eq:budget_K},\\
&W_{ii}+K_{ii}>0,~W_{ij_0}+K_{ij_0}>0.  \label{eq:root_constraint} 
\end{align}
\end{subequations}
\item If $b_i^{(n-1)}\neq \xi_i$ (not root node), fix a parent node $j_p\in \arg\min_{ j\in\mathcal{N}_i } d_j^{(n-1)} $ and design $K_{[i,:]}$ by solving the following problem
\begin{subequations}\label{eq:nroot}
\begin{align}
\min_{K_{[i,:]}} &~~\| W_{[i,:]}+K_{[i,:]} \|_0 \\
\text{s.t.} ~~& \eqref{eq:common_K}~\text{and}~\eqref{eq:budget_K},\\
&W_{i j_p}+K_{ij_p}>0. \label{cons:nroot_b}
\end{align}
\end{subequations}
\item Hereafter, update states by $x^i_{t+1}=\sum_{j=1}^n (W_{ij}+K_{ij}) x^j_{t}$. 
\end{itemize}

The key idea of Step 1 is to leverage the randomized initialization and render that the maximum beacon value is attained at a unique node almost surely. 
Step 2 combines the classic maximum consensus protocol and breadth-first search (BFS) tree construction \cite[Chap. 5]{hirvonen2020distributed} to update $\{b_i,d_i\}$. 
The node with maximum beacon is treated as the new root (it is sure that one can adopt other criteria to determine such a node). 
Since only the maximum beacon will not be updated during the iterations, the new root can be directly determined if $b_i^{(n-1)}=\xi_i$ holds. 
The intuition of adopting different parent node criterion in Step 3 lies in two aspects. 
\begin{itemize}
\item  For a non-root node $i$, the parent selection rule $j_p \in \arg\min_{j\in\mathcal{N}_i} d_j^{(n-1)}$ ensures that 
node $i$ maintains at least one in-neighbor closer to the root, 
thereby prompting a directed spanning tree oriented toward the root. 
\item For the root node, the parent is selected as $j_0 \in \arg\max_{j\in\mathcal{N}_i} d_j^{(n-1)}$, 
such that the root maximizes its coverage and strengthens the
propagation of information throughout the network. 
\end{itemize}
The detailed analysis is given in the next subsection.

\subsection{Analysis of the Proposed Method}

First, denote the node with maximum beacon value as 
\begin{equation}
r^*=\arg\max_{i\in \mathcal{V}} b_i^{(0)}, 
\end{equation}
and denote the graph associated with $\tilde{W}$ as $\tilde{\mathcal{G}}$.  
Then, we show that the aperiodicity of $\tilde{\mathcal{G}}$ is preserved by the proposed method.

\begin{theorem}\label{eq:new_conver}
By the proposed maximum-beacon based method, with probability one, the graph $\tilde{\mathcal{G}}$ contains a root node and the subgraph of root nodes is aperiodic. 
\end{theorem}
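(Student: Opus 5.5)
The plan is to show that $\tilde{\mathcal{G}}$ contains a directed spanning tree rooted at $r^*$ (every node has a directed path to $r^*$ in the convention ``$(i,j)$ means $i$ receives from $j$''), and that $r^*$ keeps a self-loop. The root set is then a strongly connected component containing a self-loop, hence aperiodic. All probabilistic content sits in Step 1: since the $\xi_i$ are drawn independently from continuous distributions, they are pairwise distinct with probability one. On this event $r^*$ is the unique maximizer, and everything that follows is deterministic.

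First, I analyze Step 2 on the original graph $\mathcal{G}$, which is strongly connected by Assumption \ref{assu:topo}. The max-update \eqref{eq:bupdate} propagates $\xi_{r^*}$ along in-edges one hop per iteration. By induction on $t$, $b_i^{(t)}=\xi_{r^*}$ exactly when $\operatorname{dist}(i\to r^*)\le t$, where distance means the length of a shortest chain $i\leftarrow\cdots$ receiving from $r^*$. Since all distances are at most $n-1$, after $n-1$ iterations every node holds $b_i^{(n-1)}=\xi_{r^*}$. Hence only $r^*$ satisfies $b_i^{(n-1)}=\xi_i$ (uniqueness again uses distinct beacons), so exactly one node runs \eqref{eq:root}.

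Simultaneously I show by induction that, once $b_i$ has locked to $\xi_{r^*}$, the depth $d_i^{(t)}$ equals the BFS distance $\operatorname{dist}(i\to r^*)$. The subtle point is that a node may carry an intermediate beacon and depth before $\xi_{r^*}$ arrives. Because \eqref{eq:d-update} only takes the minimum over neighbors whose current beacon equals the new one, stale depths attached to other beacons are discarded at the moment $b_i$ switches to $\xi_{r^*}$. After that point the recursion is exactly Bellman–Ford for the distance to $r^*$. I expect this induction to be the main technical obstacle. In particular, one must check that a neighbor $j$ with $b_j^{(t)}=\xi_{r^*}$ and a smaller true distance already carries its correct depth. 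This holds because it locked earlier, since distance equals lock time.

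Next, I treat Step 3. For a non-root $i$, with $d_i\ge1$, strong connectivity gives some $j\in\mathcal{N}_i$ with $d_j=d_i-1$. Then $j_p\in\arg\min d_j$ has $d_{j_p}=d_i-1$, and \eqref{cons:nroot_b} keeps $\tilde W_{ij_p}>0$. Following parents strictly decreases depth and reaches $r^*$ in at most $n-1$ hops, so $r^*$ is a root of $\tilde{\mathcal{G}}$. For the root, \eqref{eq:root_constraint} keeps $\tilde W_{r^*r^*}>0$, which gives the self-loop. The root set of $\tilde{\mathcal{G}}$ (nodes reachable to all others) is the strongly connected component containing $r^*$, and a strongly connected digraph with a self-loop is aperiodic by the fact recalled before Lemma \ref{le:graphical}.

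A side remark concerns feasibility. Problems \eqref{eq:root} and \eqref{eq:nroot} always admit $K_{[i,:]}=0$ when $W_{ij_p}>0$ and $W_{ii}>0$. I would note that the statement concerns the designed solution and assume $W_{ii}>0$ for the root, or else argue feasibility from $\tau>0$.
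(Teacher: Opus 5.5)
Your proposal is correct and follows essentially the same route as the paper: almost-sure uniqueness of $r^*$, the BFS labeling $d_i^{(n-1)}=\operatorname{dist}(r^*,i)$ combined with the Bellman relation so that the enforced parent arcs $\tilde W_{ij_p}>0$ give paths from $r^*$ to every node, and the enforced self-loop $\tilde W_{r^*r^*}>0$ for aperiodicity. Your version is slightly tighter in two places: you prove the depth induction explicitly, where the paper cites the BFS reference, and you identify the root set directly with the strongly connected component of $r^*$, so the paper's appeal to the $j_0$-cycle is not needed for the aperiodicity claim.
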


\begin{proof}
First, the randomized initialization for the beacons in Step 1 ensures that all $b_i^{(0)}$ are different with probability one. 
Hence, node $r^*$ is unique with probability one, and the subsequent deterministic analysis are all built upon it. 
We then proceed the proof from the following two parts. 

\emph{i) $\tilde{\mathcal{G}}$ contains a root node.} 
In Step 2, the updating rule for $b_i$ is a max-consensus process. 
Since $\mathcal{G}$ is originally strongly connected, there always exists a directed path from $r^*$ to all other nodes, where the path length is at most $n-1$. 
Let $\operatorname{dist}(r^*,i)$ be the shortest path length from $r^*$ to $i\neq r^*$. 
Since the update rule for $d_i$ is a BFS-type distance labeling rooted at $r^*$, after $n-1$ iterations we have \cite[Chap. 5]{hirvonen2020distributed}
\begin{equation}\label{eq:d_n_1}
  d_i^{(n-1)}=\operatorname{dist}(r^*,i). 
\end{equation}
In Step~3, for a node $i\neq r^*$, the parent node $j_p\in \arg\min_{ j\in\mathcal{N}_i } d_j^{(n-1)}$ is fixed and the constraint $W_{ij_p}+K_{ij_p}>0$ in~\eqref{eq:nroot} enforces an arc $j_p \rightarrow i$ in $\tilde{\mathcal G}$. 
Meanwhile, note that the distance function $\operatorname{dist}(r^*,i)$ satisfies the Bellman optimality relation on the directed graph, i.e., for any $i\neq r^*$,
\begin{equation}\label{eq:bellman_dist}
\operatorname{dist}(r^*,i)=1+\min_{j\in\mathcal N_i}\operatorname{dist}(r^*,j).
\end{equation}
Applying the above property to $i$ and $j_p$, we have
\begin{equation}\label{eq:parent_plus_one}
d_i^{(n-1)}=1 + d_{j_p}^{(n-1)}.
\end{equation}
Recursively, the parent arcs construct a directed path from $r^*$ to $i$. 
By arbitrariness of $i$, the node $r^*$ is a root node in $\tilde{\mathcal G}$.

\emph{ii) The subgraph of root nodes is aperiodic.} 
In Step 3, when solving \eqref{eq:root}, the root node $r^*$ has a predecessor node $j_0$ due to the constraint $W_{ij_0}+K_{ij_0}>0$. 
Thus, there exists a subset $\mathcal{V}_c\subseteq \mathcal{V}$ such that the nodes in $\mathcal{V}_c$ form a directed cycle $r^* \rightarrow \cdots \rightarrow j_0 \rightarrow r^* $. 
Clearly, $\mathcal{V}_c$ is a strongly connected component in which all nodes are global root roots of $\tilde{\mathcal{G}}$. 
Meanwhile, since $W_{ii}+K_{ii}>0$ is enforced, the greatest common divisor of the lengths of directed cycles in $\mathcal{V}_c$ is exactly 1, which implies that $\mathcal{V}_c$ is aperiodic. 
The proof is completed. 
\end{proof}

Theorem \ref{eq:new_conver} essentially reveals that the network connectivity and the convergence of $\tilde{W}$ are guaranteed by the proposed method. 
A drawback here is that the strongly connected characteristic of $\mathcal{G}$ is not necessarily preserved.  
However, there are at least two root nodes in $\tilde{\mathcal{G}}$, and the selection rule $j_0\in \arg\max_{ j\in\mathcal{N}_i } d_j^{(n-1)}$ is used to include as many roots as possible. 
If there exists a $j\in \mathcal{N}_{r^*}$ such that the shortest path length from $r^*$ to $j$ is $n-1$, then $\tilde{\mathcal{G}}$ is still strongly connected.

We now analyze the solutions of problem \eqref{eq:root} and problem \eqref{eq:nroot}. 
Intuitively, finding the minimal $\ell_0$-norm $\| W_{[i,:]} + K_{[i,:]}\|_0$ is a mixed integer linear programming (MILP) problem, which is NP-hard in general. 
Thanks to the structural constraint \eqref{eq:common_K}, the optimal solutions of problem \eqref{eq:root} and \eqref{eq:nroot} can be constructed in polynomial time. 
Define the mandatory neighbor set $\mathcal{N}_i^m$ and the non-mandatory candidate neighbor set $\mathcal{C}_i^m$ for node $i$ as 
\begin{equation}\label{eq:candidate}
\mathcal{N}_i^m=\left\{
\begin{aligned}
  &\{j_p\},&\text{if}~i\neq r^*\\
  & \{ {j_0},{i}\},&\text{otherwise}
\end{aligned}\right. ,\quad
\mathcal{C}_i^m=\mathcal{N}_i^+ \setminus \mathcal{N}_i^m. 
\end{equation} 
Let $W_{ij_{(1)}} \ge W_{ij_{(2)}}\ge \cdots \ge W_{ij_{(|\mathcal{C}_i^m|)}}$ be the sorted weights of nodes in $\mathcal{C}_i^m$. 
We then present a theorem that characterizes the explicit solutions of the two problems. 
This is completed by a construction-based proof, which will be further used to design the specific algorithm that solves the two problems.


\begin{theorem}\label{th:solution}
For the case $i\neq r^*$, the optimal value of the problem \eqref{eq:nroot} is given by 
\begin{align}\label{eq:c1_cost}
c_{nroot}^*=\begin{cases}
1, & \tau \ge 2(1-W_{ij_p})\\
1+\bar c_1, & \text{otherwise}
\end{cases},
\end{align}
where
$\bar c_1 =\min_c
\left\{c: \sum_{r=1}^{c} W_{i j_{(r)}}
\ge 1-\frac{\tau}{2}-W_{ij_p}
\right\}$, 
and the corresponding optimal support sets are $\{ j_p\}$ and $\{ j_p,j_{(1)},\cdots,j_{(c_{nroot}^*-1)} \}$, respectively. 
For the case $i= r^*$, the optimal value of the problem \eqref{eq:root} is given by 
\begin{align}\label{eq:c2_cost}
c_{root}^*=
\begin{cases}
2, & \tau \ge 2(1-W_{ii}-W_{ij_0})\\
2+\bar c_2, & \text{otherwise}
\end{cases},
\end{align}
where
$
\bar c_2 \!=\! \min_c
\left\{c:\sum_{r=1}^{c} W_{i j_{(r)}}
\!\ge 1 \!-\frac{\tau}{2} \! - W_{ii} \!- W_{ij_0}\right\}
$, 
and the corresponding optimal support sets are $\{i,j_0\}$ and $\{i,j_0,j_{(1)},\ldots,j_{(c_{root}^*-2)}\}$, respectively.
\end{theorem}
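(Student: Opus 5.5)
We reduce each row problem to a one-dimensional feasibility question: which support sets $S\subseteq\mathcal{N}_i^+$ for the new row $w:=W_{[i,:]}+K_{[i,:]}$ are admissible, and what is the least $\ell_1$ budget needed to realize each one. Since $\sum_j K_{ij}=0$ and $\sum_j W_{ij}=1$, the new row also sums to one. The sparsity constraint \eqref{eq:common_K} forces $\operatorname{supp}(w)\subseteq\mathcal{N}_i^+$. The positivity constraints \eqref{cons:nroot_b} or \eqref{eq:root_constraint} force $\mathcal{N}_i^m\subseteq\operatorname{supp}(w)$. Note that nonnegativity of $w$ is not imposed explicitly in \eqref{eq:nroot} and \eqref{eq:root}, but \eqref{eq:row_stochastic} is part of the overall design, so we keep $w\ge0$ throughout.

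\textbf{Step 1 (minimal budget for a fixed support).} For a fixed $S\supseteq\mathcal{N}_i^m$, every $j\notin S$ needs $K_{ij}=-W_{ij}$. This removes the mass $R(S):=1-\sum_{j\in S}W_{ij}$ from the row, and that mass must reappear as a net positive change on $S$, because $\sum_j K_{ij}=0$. Hence
\begin{equation*}
\|K_{[i,:]}\|_1=\sum_{j\notin S}W_{ij}+\sum_{j\in S}|K_{ij}|\ge R(S)+R(S)=2R(S).
\end{equation*}
Equality is attained by adding the mass $R(S)$ to entries of $S$, for example all of it to $j_p$ or split between $i$ and $j_0$. This keeps every entry of $S$ strictly positive, provided $W_{ij}>0$ on $S$, which holds since $S\subseteq\mathcal{N}_i^+$ (if $W_{ii}=0$ for the root, the added mass makes that entry positive anyway). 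So $S$ is feasible if and only if $2R(S)<\tau$ under the strict budget \eqref{eq:budget_K}, or $2R(S)\le\tau$ if the boundary is allowed. The theorem uses $\ge$, so we treat the budget as non-strict, or as attainable in the limit, and note this as a convention.

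\textbf{Step 2 (minimize cardinality).} Feasibility reduces to $\sum_{j\in S}W_{ij}\ge 1-\tau/2$ with $S\supseteq\mathcal{N}_i^m$.
\begin{itemize}
\item If $S=\mathcal{N}_i^m$ already works, the optimum is $|\mathcal{N}_i^m|$. This gives $1$ when $\tau\ge2(1-W_{ij_p})$ and $2$ when $\tau\ge 2(1-W_{ii}-W_{ij_0})$.
\item Otherwise we must add $c$ extra elements of $\mathcal{C}_i^m$ whose weights sum to at least $1-\tau/2-\sum_{j\in\mathcal{N}_i^m}W_{ij}$. By an exchange argument, for a fixed $c$ the largest achievable sum is attained by the $c$ largest weights $W_{ij_{(1)}},\dots,W_{ij_{(c)}}$. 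Replacing any chosen element by an unchosen heavier one does not decrease the sum. Hence the minimal $c$ is exactly $\bar c_1$, respectively $\bar c_2$, and the optimal supports are those stated.
\end{itemize}
Existence of $\bar c$ follows because taking all of $\mathcal{C}_i^m$ gives sum $=1-\sum_{\mathcal{N}_i^m}W_{ij}$, which is at least the required threshold. Nonemptiness is thus automatic.

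\textbf{Main obstacle.} The delicate point is the lower bound in Step 1, namely that no signed $K_{ij}$ on $S$ can beat $2R(S)$. I would prove it by splitting $\sum_{j\in S}K_{ij}=R(S)$ into positive and negative parts, which gives $\sum_{j\in S}|K_{ij}|\ge R(S)$. I would also reconcile the strict inequality in \eqref{eq:budget_K} with the non-strict thresholds in \eqref{eq:c1_cost} and \eqref{eq:c2_cost}.
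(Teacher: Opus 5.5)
Your proposal is correct and follows essentially the same route as the paper's proof. For a fixed support $S\supseteq\mathcal{N}_i^m$ the minimal $\ell_1$ cost is $2\bigl(1-\sum_{j\in S}W_{ij}\bigr)$, and the cardinality is then minimized by greedily keeping the largest non-mandatory weights.

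Your version is slightly tighter than the paper's in two places. First, you prove the lower bound $\|K_{[i,:]}\|_1\ge 2R(S)$ via the positive/negative split, whereas the paper only computes the cost of one particular compensation. Second, you correctly flag that the strict budget in \eqref{eq:budget_K} does not match the non-strict thresholds in \eqref{eq:c1_cost}--\eqref{eq:c2_cost}, a point the paper passes over.
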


\begin{algorithm}[t]
  \caption{Heuristic based local topology modification}
  \label{alg:row_design}
  \begin{algorithmic}[1]
    \REQUIRE local topology $W_{[i,:]}$, mandatory neighbor set $\mathcal{N}_i^m$, budget $\tau>0$, and a small adjusting parameter $\delta>0$. 
    \ENSURE local feedback $K_{[i,:]}$.
    \STATE Initialize $K_{[i,:]}=\bm{0}_{1\times n}$. 
    \STATE Obtain non-mandatory candidate set $\mathcal{C}_i^m$ by \eqref{eq:candidate}, and sort indices in $\mathcal{C}_i^m$ in ascending order of $W_{ij}$. 
    \STATE Set auxiliary variables $\mathcal{N}_i^d  = \emptyset$ and $a = 0$. 
    \FOR{each $W_{ij} \in \mathcal{C}_i^m$ in ascending order}
      \IF{$2(a + W_{ij}) \le \tau$}
        \STATE Update $\mathcal{N}_i^d  = \mathcal{N}_i^d  \cup \{j\}$ and $a =a + W_{ij}$.
      \ELSE
        \STATE \textbf{break}
      \ENDIF
    \ENDFOR
    \STATE Set $K_{ij}= -W_{ij}, \forall j\in\mathcal{N}_i^d $, compute $\alpha_z=\sum_{j\in\mathcal{N}_i^d } W_{ij}$. 
    \STATE Obtain remaining candidate set $ \mathcal{C}_r=\{\mathcal{C}_i^m \setminus \mathcal{N}_i^d  \} \cup \mathcal{N}_i^m$. 
    \STATE Set $\Delta_1 =\frac{\alpha_z}{|\mathcal{C}_r|}$, and compensate the row-sum condition by updating $K_{ij} = K_{ij} + \Delta_1$ for all $j\in\mathcal{C}_r$. 
    \IF{$|\mathcal{C}_r|=1$}
      \STATE return $K_{[i:,]}$. 
    \ENDIF
    \STATE Compute left budget $\tau_r\!=\!\left\{ \begin{aligned} &\min\{\tau \!-\! 2\alpha_z, 1 \}, ~\text{if}~|\mathcal{C}_r|\!=\!2 \\  &\tau \!- 2\alpha_z, ~\text{otherwise} \end{aligned}\right.$. 
    \STATE Construct $\mathcal{C}_a$/$\mathcal{C}_b$ as the first/last $h=\lfloor \frac{|\mathcal{C}_r|}{2} \rfloor $ indices in $\mathcal{C}_r$
    \STATE Set $\Delta_2=\min \{ \frac{\tau_r}{2h}, \min_{j\in\mathcal{C}_r} (W_{ij} + K_{ij} - \delta) \}$ for non-negativity. 
    \STATE Update $K_{ij} = K_{ij} + \Delta_2$ for all $j\in\mathcal{C}_a$, $K_{ij} = K_{ij} - \Delta_2$ for all $j\in\mathcal{C}_b$. 
    \STATE If $K_{ij_1}=0$ happens for a $j_1\in\mathcal{C}_b$, adjust $K_{ij_1} = K_{ij_1} + \delta$ along with $K_{ij_2} = K_{ij_2} - \delta$ for a $j_2\in\mathcal{C}_a$. 
  \end{algorithmic}
\end{algorithm}

\begin{proof}
First, we consider node $i\neq r^*$ and prove \eqref{eq:c1_cost}. 
Due to the constraint \eqref{cons:nroot_b}, node $i$ has at least one in-neighbor $j_p\in\mathcal{N}_i$. 
Suppose a single edge is allowed on $\tilde{W}_{[i,:]}$ ($\tilde{w}_{i j_p}=1$), and denote $e_{j_p}$ as the unit row vector with the $j_p$-th element being $1$. 
Then, the deviation $\| W_{[i,:]}-e_{j_p} \|_1$ is given by 
\begin{align}\label{eq:dev_w}
\| W_{[i,:]} \!-\! e_{j_p} \|_1&=(1-W_{i j_p})+\sum\nolimits_{\ell\neq j_p} W_{i \ell} \nonumber \\
& =2(1-W_{i j_p})\triangleq \tau_1^*,
\end{align}
which is exactly the cost for $\|K_{[i,:]}\|_1$ to support a single edge in $\tilde{W}_{[i,:]}$. 
Hence, if the budget $\tau\ge \tau_1^*$, the optimal support number for $\tilde{W}_{[i,]}$ is $c^*(\tau)=1$. 
The first case in \eqref{eq:c1_cost} is proved.

When the available budget $\tau<\tau_1^*$, there will be more non-zero edges in $\tilde{W}_{[i,:]}$ apart from $W_{i j_p}$. 
Recall the non-mandatory candidate support set is given by $\mathcal{C}_i^m = \mathcal{N}_i^+ \setminus\{j_p\} $ when $i\neq r^*$. 
Denote the chosen non-mandatory support set as $\mathcal{C} \subseteq \mathcal{C}_i^m$, and the corresponding weight sum in $W_{[i,:]}$ on $\mathcal{C}$ as 
$\alpha_c=\sum_{j\in\mathcal{C}} W_{ij} $. 
Then, for the rest non-support elements, the partial sum $(1-\alpha_c-W_{i j_p})$ should be compensated to zero. 
Meanwhile, to keep the total sum $\sum_{j=1}^n \tilde{w}_{ij}=1$, the compensated part $(1-\alpha_c-W_{i j_p})$  should be exactly added into $\mathcal{C}\cup\{j_p\} $. 
Hence, the $\ell_1$ distance between $\tilde{W}_{[i,:]}$ and $W_{[i,:]}$ in this case is given by
\begin{equation}
\| \tilde{W}_{[i,:]} - W_{[i,:]} \|_1= 2(1 -\alpha_c -W_{i j_p}). 
\end{equation}
Clearly, the larger $\alpha_c$ is, the smaller $\| \tilde{W}_{[i,:]} - W_{[i,:]} \|_1$ is. 
Given a fixed support number $c=|\mathcal{C}|$, it follows from a counting argument that 
the maximum $\alpha_c$ is given by 
\begin{align}\label{eq:support}
\alpha_c^{*}=\max_{c=|\mathcal{C}|} \alpha_c= \sum\nolimits_{r=1}^c W_{ij_{(r)}}.  
\end{align}
Reversely, when the budget $\tau$ is fixed, the minimal support number is given by
\begin{align}
c_{nroot}^*&=1+\min\left\{ c: 2(1 -\alpha_c^* -W_{i j_p} ) \le \tau \right\} \nonumber \\
  &=1+\min\left\{ c: \alpha_c^* \ge 1-\frac{\tau}{2}-W_{i j_p} \right\} ,
\end{align}
which proves the second case in \eqref{eq:c1_cost}.

Next, we turn to prove \eqref{eq:c2_cost} for $i=r^*$, where the derivation procedures resemble those of case $i\neq r$. 
The only difference is that the two edges $\tilde{w}_{ii}$ and $\tilde{w}_{ij_0}$ are required to be positive, and the support set in \eqref{eq:support} is changed into $\mathcal{C}\subseteq \mathcal{N}_i\setminus\{ {j_0} \cup {i}\}$. 
Hence, the best support number in this case can only be 2, and the corresponding budget threshold is given by
\begin{align}
\tau_2^* \triangleq 2(1-W_{ii}-W_{i j_0}). 
\end{align}
Consequently, when $\tau<\tau_2^*$, more edges are needed to be active, and the minimal edge number is given by 
\begin{align}
c_{root}^*&=2+\min\left\{ c: 2(1 -\alpha_c^* -W_{ii }-W_{i j_0} ) \le \tau \right\} \nonumber \\
  &=2+\min\left\{ c: \alpha_c^* \ge 1-\frac{\tau}{2}-W_{ii}-W_{i j_0} \right\} .
\end{align}
The proof is completed. 
\end{proof}

Theorem \ref{th:solution} gives the optimal values and the corresponding solutions to the problems \eqref{eq:nroot} and \eqref{eq:root}. 
Note that the two optimization problems only seek to find the minimum support number of $\tilde{W}_{[i,:]}$, and has no direct concern on which elements are positive or how large they are. 
The provided solutions are sufficient to reach the optimal values. 
In many situations, the solutions to the problems \eqref{eq:nroot} and \eqref{eq:root} are not unique, due to i) the support set $\mathcal{C}$ could be multiple, 
and ii) how to distribute the part sum $(1-\alpha_c^*)$ is not unique even if $\mathcal{C}$ is unique.


In fact, the proof of Theorem \ref{th:solution} essentially provides a greedy idea to construct closed-form optimal solutions of the problem \eqref{eq:nroot} and \eqref{eq:root}. 
The key insight is that when node $i$ selects a support set of nodes that reach $i$, it just keeps enough of the largest topology weights such that their sum covers $( 1-{\tau}/{2})$. 
Following this line, we design a heuristic topology modification algorithm to solve this step, presented in Algorithm \ref{alg:row_design}. 
Notice that in the inputs, an adjusting parameter $\delta>0$ is used to adapt the non-negativity constraint. 
Since we mainly need to sort the local weights when solving the problem, the overall computational complexity of this heuristic algorithm is $\bm{O}(|\mathcal{N}_i^+| \cdot \log |\mathcal{N}_i^+| )$, which is significantly lower than general MILP problems (NP-hard). 

\begin{remark}
In summary, the proposed topology preservation method starts from a given strongly connected row-stochastic topology, and induces a directed forest structure in a distributed way. 
The topology privacy is preserved from two aspects: i) for edges that can be hidden as permitted by the budget, they are rendered indistinguishable from nonexistent ones, ii) for the remaining edges that are not fully hidden, substantial deviations are introduced to their true weights. 
\end{remark}

\begin{figure}[t]
\centering
\subfigure[ $\tau\!=\!0$ (original topology). ]{\label{topo1}
 \includegraphics[width=0.22\textwidth]{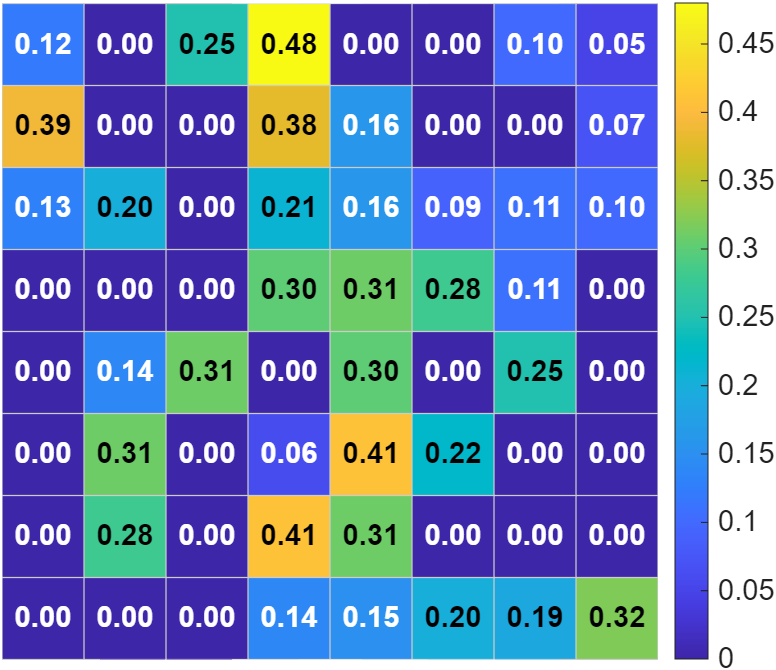}}
\hspace{-0.5pt}
\subfigure[ $\tau=0.4$. ]{\label{topo2}
\includegraphics[width=0.22\textwidth]{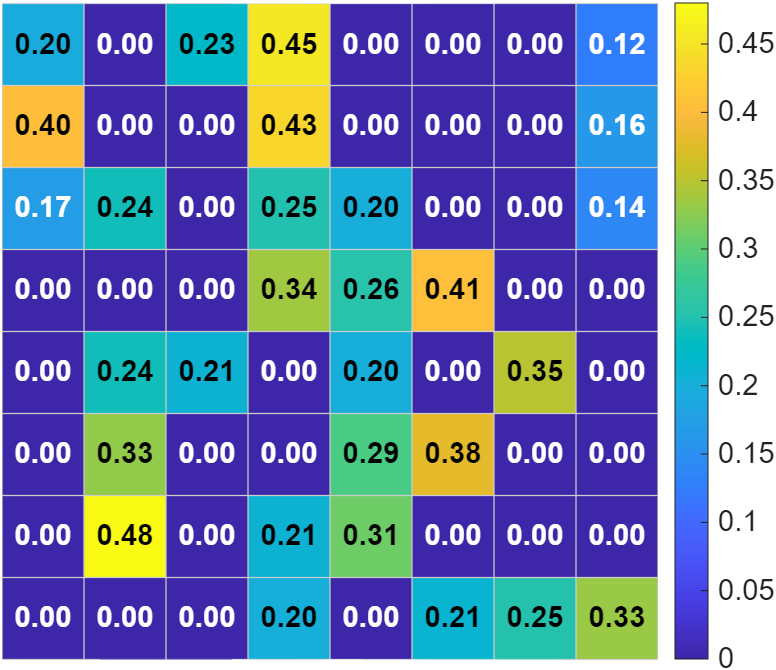}}
\subfigure[ $\tau=0.8$. ]{\label{topo3}
\includegraphics[width=0.22\textwidth]{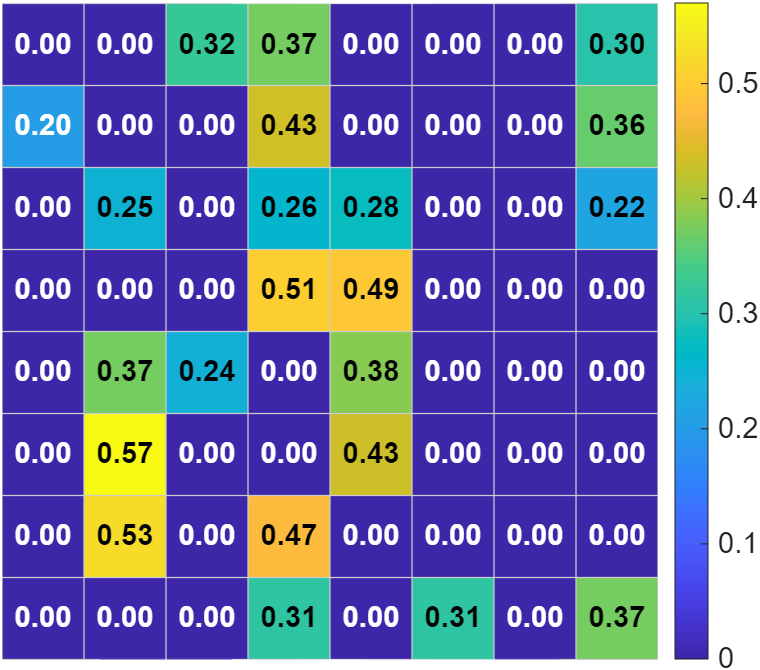}}
\hspace{-0.5pt}
\subfigure[ $\tau=1.2$. ]{\label{top4}
\includegraphics[width=0.22\textwidth]{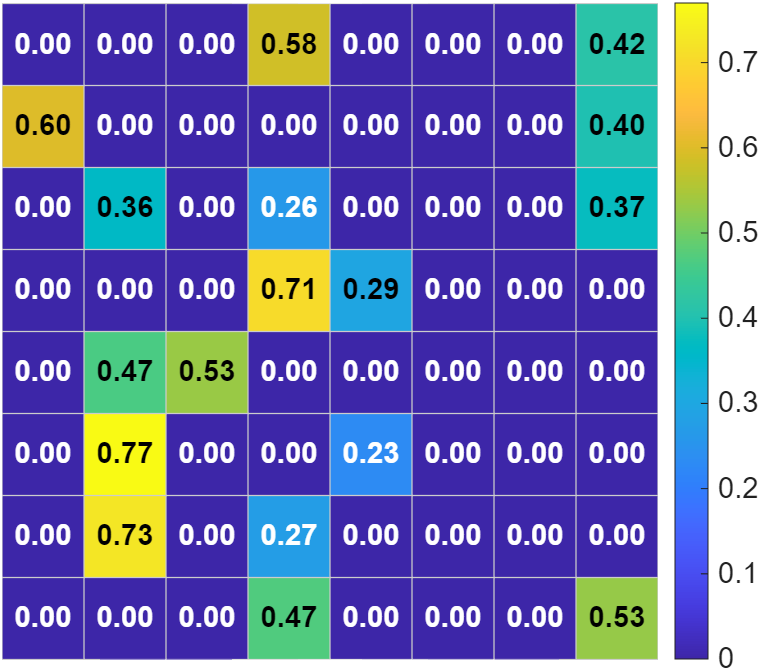}}
\subfigure[ $\tau=1.6$. ]{\label{top5}
\includegraphics[width=0.22\textwidth]{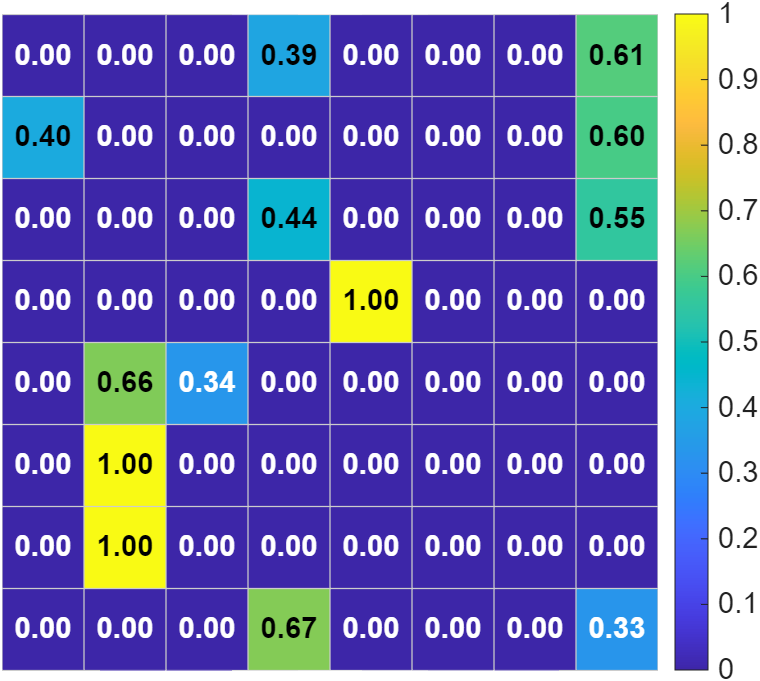}}
\hspace{-0.5pt}
\subfigure[ $\tau=2$. ]{\label{topo6}
\includegraphics[width=0.22\textwidth]{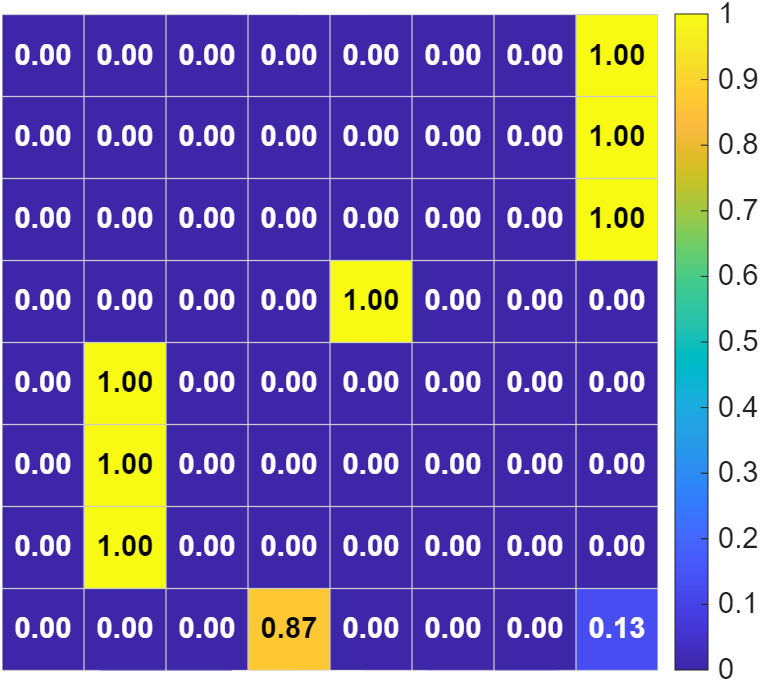}}
\vspace*{-8pt}
\caption{The original topology and the modified topologies given different budgets. Notice that the matrix entries are displayed with two decimal places.}
\label{matrix_W2}
\vspace*{-10pt}
\end{figure}

\begin{figure*}[t]
\centering
\subfigure[Inference performance v.s. $\tau$.]{\label{fig:curve_tau}
\includegraphics[width=0.325\textwidth]{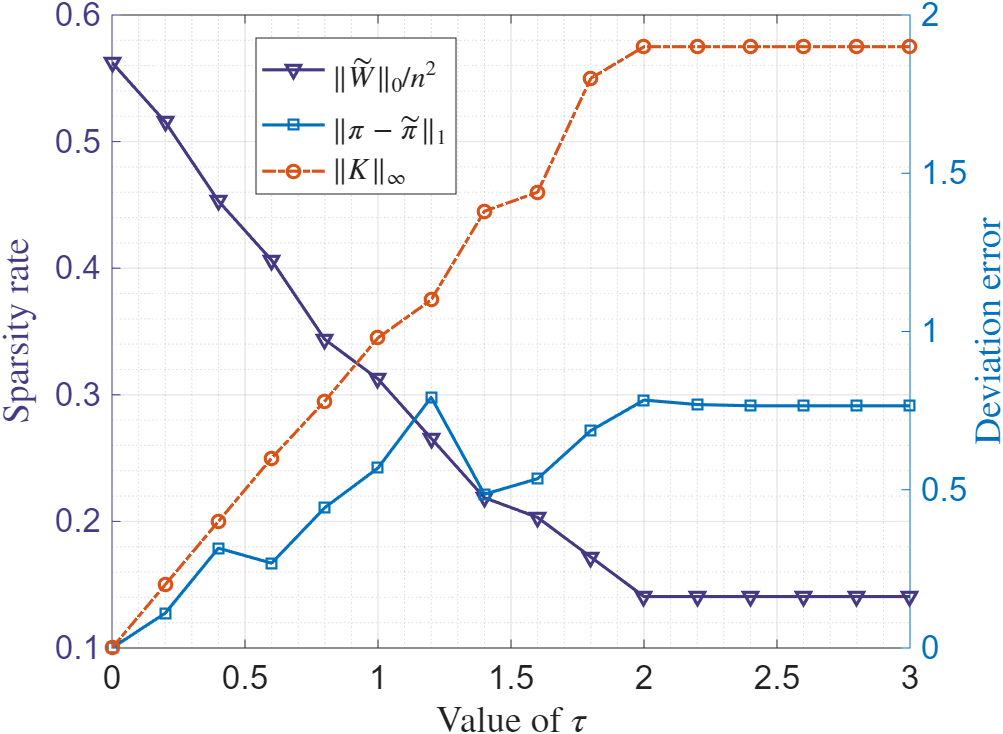}} 
\subfigure[State deviation.]{\label{fig:three_state_error1}
\includegraphics[width=0.315\textwidth]{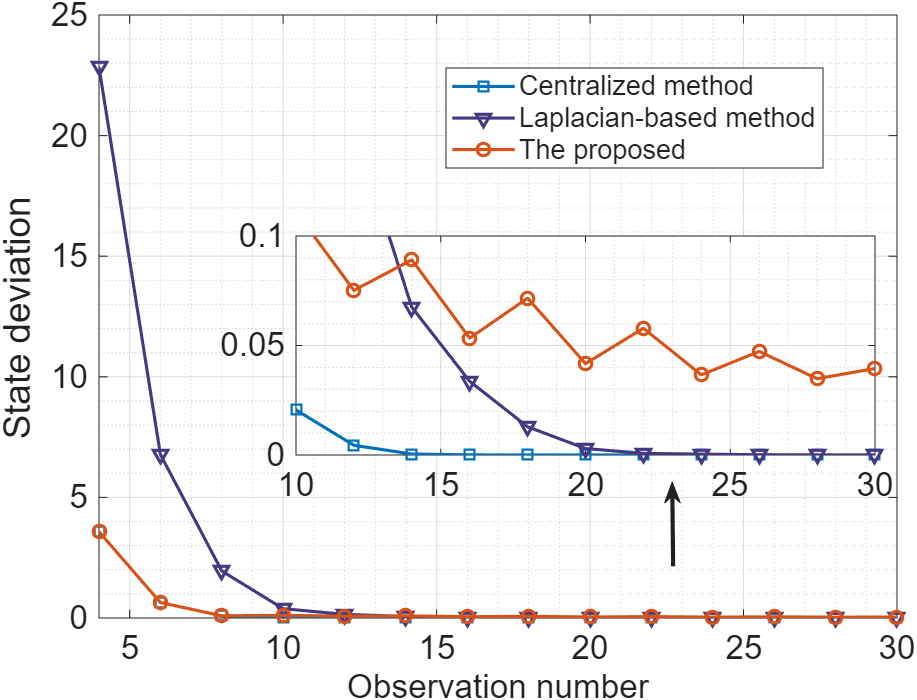}}
\subfigure[Inference errors $Er_1$ and $Er_2$.]{\label{fig:three_topo_error1}
\includegraphics[width=0.31\textwidth]{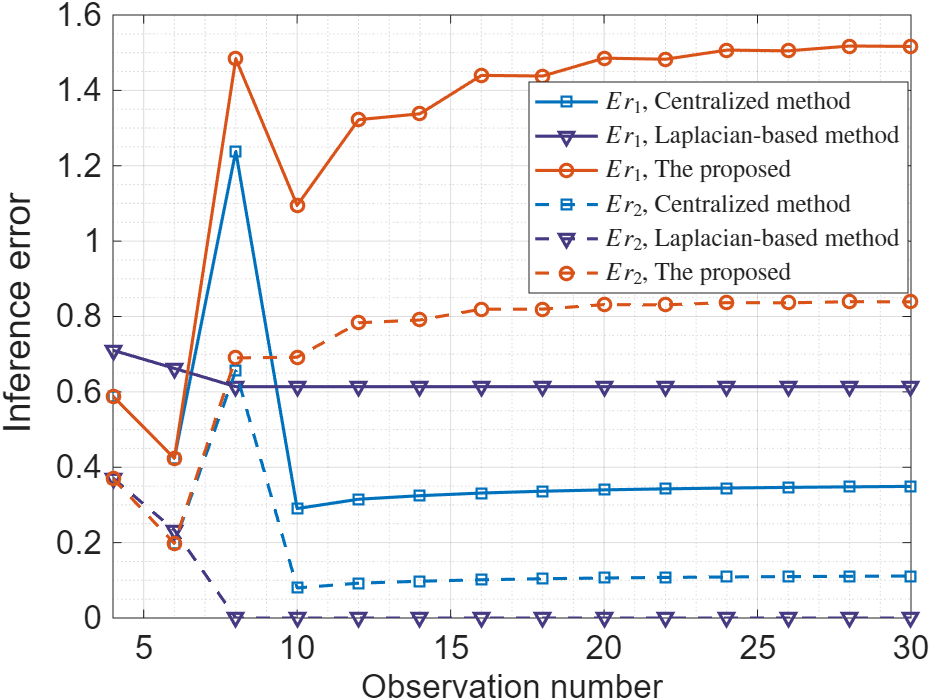}}
\vspace{-8pt}
\caption{Inference performance under different $\tau$ and comparisons with centralized and Laplacian-based methods.}
\label{fig:three_comparison}
\end{figure*}

\section{Numerical Simulations}\label{sec:simulation}

This section provides numerical experiments to verify the effectiveness of the proposed method on rendering topology inaccuracy, and compares it with existing methods. 
For simple expressions, denote the topology identifiable from $\{x_t\}_{t=0}^T$ as $\hat{W}=X_{1:T} (X_{0:T-1})^{\dag}$. 
We consider a directed network of $8$ nodes, whose topology matrix $W$ is drawn in Fig.~\ref{topo1}. 
The initial state is given by $x_0=[-2, -48, -35, -50, -56, 60, 0, -84]^\intercal$, and node $8$ has the maximum beacon value by the randomized initialization.

\textit{Performance under different budget $\tau$}. 
First, we implement the proposed heuristic topology modification algorithm (Algorithm \ref{alg:row_design}) to show how the topology structure changes according to different budget $\tau$, which is presented in Fig.~\ref{topo2}-Fig.~\ref{topo6}. 
It is clear that as $\tau$ increases, a more sparse structure will be promoted while the network connectivity is preserved. 
If the budget $\tau$ is at most sufficient to hide $m$ edges (determined by the critical conditions in Theorem \ref{th:solution}) and still has remaining parts, then the algorithm will distribute remaining budgets to other nonzero elements and increase their deviations with the ground truth. 
A typical example is the case of $W_{[1,:]}$ when $\tau=0.4$. 
In this case, the parent node of node $1$ is selected as $j_p=8$ according to the criteria $j_p\in \arg\min_{ j\in\mathcal{N}_i } d_j^{(n-1)} $, and thus $W_{18}$ cannot be made zero. 
Then, the budget only supports to hide the edge $W_{17}=0.1$ along with its compensation, and the remaining budget $(\tau-2 W_{17})$ are distributed to other edges. 
When $\tau=2$, this budget is sufficiently large to reach to the optimal values in Problem \eqref{eq:root} and \eqref{eq:nroot}, where the new root node $8$ maintains two edges while the rest nodes maintain only one, as shown in \ref{topo6}. 
Besides, we plot the sparsity rate of the topology and the bounds $\{\|K\|_{\infty}, \|\pi -\tilde{\pi}\|_{\infty}\}$ as $\tau$ increases. 
As shown in Fig.~\ref{fig:curve_tau}, the sparsity rate will decrease and the bound $\|K \|_{\infty}$ will increase to stable values as $\tau$ is sufficiently large. 
The error $ \|\pi -\tilde{\pi}\|_{\infty}$ also becomes stable as $\tau$ grows. 
We remark that the stable phenomenon is expected because when the budget is sufficient large, the modified topology will be uniquely determined by Algorithm \ref{alg:row_design}. 

\textit{Comparisons with centralized and Laplacian-based methods}. 
Next, given the same initial state $x_0$ and budget $\tau=2$, we compare the proposed method with the centralized method demonstrated in Sec.~\ref{subsec:inaccurate} (denoted as \textbf{M1}) and the Laplacian-based method \cite{yushan2025Resistant} (denoted as \textbf{M2}). 
The state deviation curves are plotted in Fig.~\ref{fig:three_state_error1}. 
It shows that as the observation number $T$ increases, the deviation $\|x_t-x_t^*\|_2$ for both \textbf{M1} and \textbf{M2} will approach to zero, 
while it is maintained below a small value for the proposed distributed method. 
This difference is expected because the proposed method slightly sacrifices the accuracy of $\|x_t-x_t^*\|_2$ to promote a sparsity structure. 
Then, we draw the topology inference errors of three methods, which are evaluated by the following normalized mean square error (denoted as $Er_1$) and its tailored version for the non-diagonal elements of $W$ (denoted as $Er_2$):
\begin{align*}
Er_1 = \frac{\| \hat{W}- W\|_{F}}{\|W\|_{F}},~Er_2 = \frac{ \min_{\gamma>0}\| \gamma \hat{W}_{\text{nd}}- W_{\text{nd}}\|_{F}}{\|W_{\text{nd}}\|_{F}},
\end{align*}
\!\!where $(W_{\text{nd}})_{ij}=W_{ij}$ for $i\neq j$ and $(W_{\text{nd}})_{ii}=0$ for $i\in\mathcal{V}$. 
Note that the parameter $\gamma$ in $Er_2$ is used to avoid the scalar ambiguity in $\textbf{M2}$. 
As shown in Fig.~\ref{fig:three_topo_error1}, \textbf{M1} and the proposed method incur biased inference error concerning both $Er_1$ and $Er_2$, while \textbf{M2} can only achieve that concerning $Er_1$. 
This point verifies the strength of the proposed method in preserving relative weight information, as analyzed in Sec.~\ref{subsec:Laplacian}.

\begin{figure}[t]
\centering
\subfigure[State deviation.]{\label{fig:noise_state}
\includegraphics[width=0.232\textwidth]{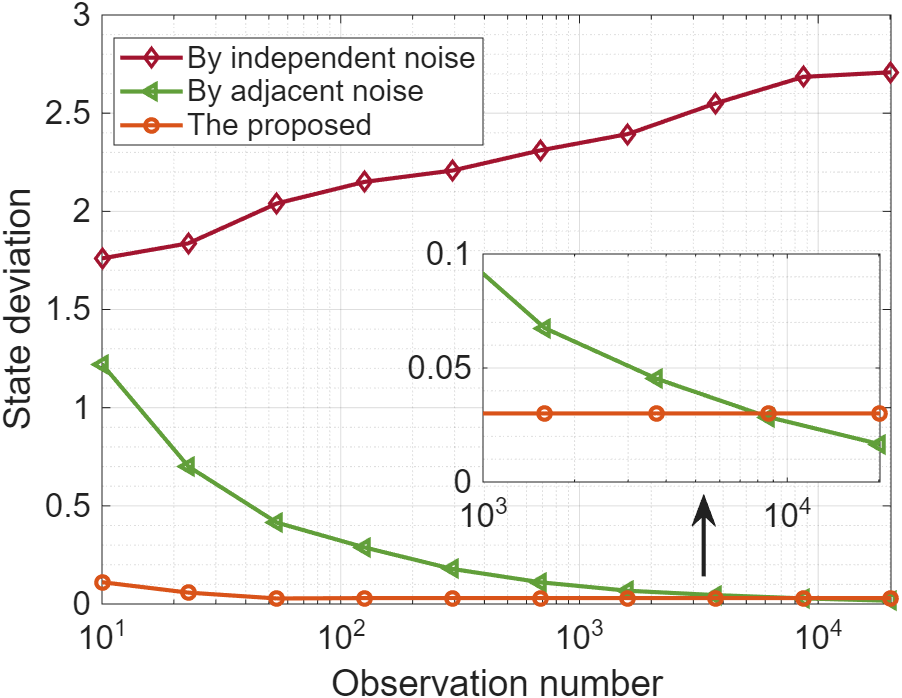}}
\hspace{-8pt}
\subfigure[Inference error $Er_1$.]{\label{fig:noise_topo}
\includegraphics[width=0.232\textwidth]{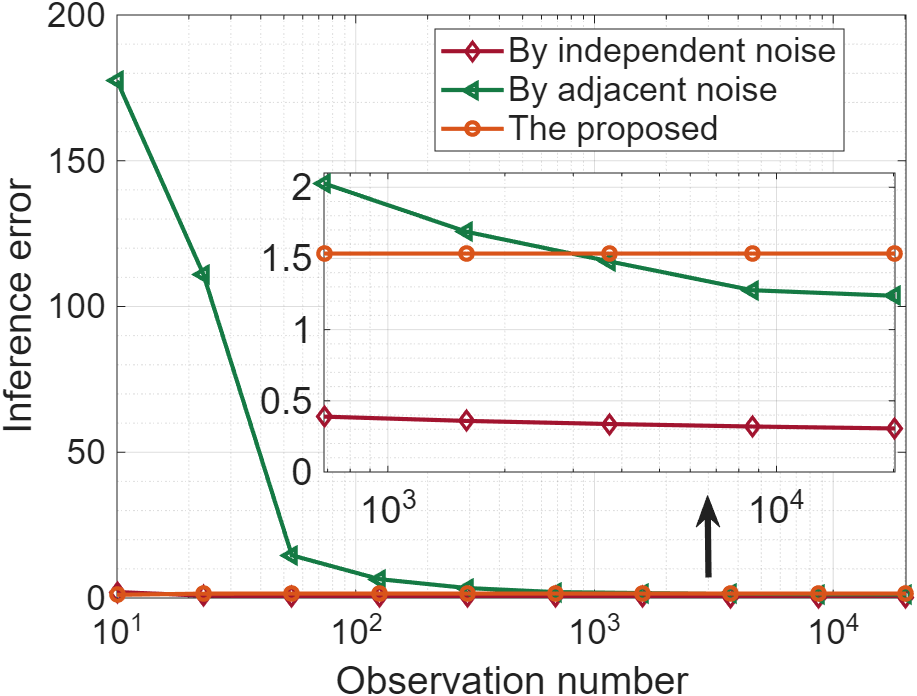}}
\vspace{-8pt}
\caption{Comparisons with noise-adding based methods.}
\label{fig:noise_comparison}
\end{figure}

\textit{Comparisons with noise-adding based methods}. 
Finally, given the same setting as the last experiment, we turn to examine the comparisons between the proposed method with existing noise-adding methods in private consensus literature \cite{mo2017privacy,he2020differential,li2025Preserving}. 
The basic idea of the latter methods is to use adjacent decaying noises in the iteration process (denoted as \textbf{M3}), which can be described by
\begin{align}
    x^a_{t+1}=W x^a_t + \theta_t,~\theta_t=\omega_t - \omega_{t-1},
\end{align}
where $\omega_t $ is independent zero-mean noise with decaying variance and satisfies $\lim_{t\to\infty} \mathbb{E}[\omega_t \omega_t^\intercal]=0$. 
Notice that the reason of making $\theta_t$ contain two adjacent decaying noises is to achieve $\lim_{t\to\infty}\mathbb{E}[\|x^a_t -x^*_t\|_2]=0$, and we use $\mathbb{E}[\omega_t \omega_t^\intercal]=I_n/(t+1)^2$ in this experiment. 
Since we emphasize a worse-case preservation, the inference method here adopts the following asymptotic unbiased estimator \cite[Chap. 7.6]{ljung1999system}:
\begin{align}
    \hat{W}_{adj}=\Gamma(2) \Gamma^{-1}(1),
\end{align}
where $\Gamma(1)=\frac{1}{T-1} \sum_{t=2}^T x^a_t (x^a_{t-1})^\intercal$ and  $\Gamma(2)=\frac{1}{T-2} \sum_{t=3}^T x^a_t (x^a_{t-2})^\intercal$. 
For better illustration, we also consider the case of using independent decaying noise (denoted as \textbf{M4}), i.e., $x^b_{t+1}=W x^b_t+\omega_t$, and the corresponding inference method adopts the ordinary least square estimator, which is also asymptotically unbiased in this case \cite{li2025Preserving}. 
Due to the stochasticity in \textbf{M3} and \textbf{M4}, we randomly run the experiments $50$ times for the two methods and use the average error for evaluation. 
Meanwhile, to exhibit the asymptotic performance, we use log-axis for the observation number. 
As depicted in Fig.~\ref{fig:noise_state}, the state deviation curve for \textbf{M3} will approach zero as $T\to\infty$, while the curve for \textbf{M4} fluctuates as $T$ increases due to $\lim_{t\to\infty}\mathbb{E}[\|x^b_t -x^*_t\|_2]\neq 0$. 
By contrast, the proposed method only incurs a small stable deviation (below $0.05$), which is controlled by $\tau$ and acceptable in practice. 
Then, Fig.~\ref{fig:noise_topo} plots the topology inference error $Er_1$ for three methods. 
It is observed that the curves for \textbf{M3} and \textbf{M4} will gradually decay to zero in a slow rate, while the curve for the proposed remains stable regardless of $T$. 
This point demonstrates the outperformance of the proposed method in preserving the topology.

\section{Conclusions}\label{sec:conclusion}

In this paper, we investigated the topology privacy preservation problem in consensus networks. 
Specifically, we proposed a feedback-based design that intentionally violates topology identifiability conditions, thereby preventing external observers from recovering the true topology from available data, while preserving the intended consensus convergence. 
Considering both the partial and full observation cases, we derived the conditions of feedback to enforce topology unsolvability or inaccuracy from available observations. 
The derived solution constructions can be directly applied a class of consensus networks with centralized–design and distributed–execution architecture. 
Furthermore, for the distributed design setting, we established a controllable tradeoff between the state convergence deviation and the preservation of existing links, and proposed a topology modification method with limited privacy budget. 
Compared with noise-adding based methods, this method enjoyed the merit of finite-time implementations and controllable privacy guarantees. 
Extensive simulations demonstrated the effectiveness of the proposed method. 

Our future work will focus on developing an adaptive distributed scheme, which supports individual nodes to learn global topology information while achieving stronger topology privacy preservation performance.

\appendix

\subsection{Topology Identification from Partial Observations}\label{appen:inference}

In this part, we show how to infer the topology indirectly using partial observation data. 
Similar algorithms have been developed for the realization problem \cite{Ho1966effective,He2025range}. 
Recall the constructed matrices $\tilde{Y}_{0:T-n}^{*}$ given in \eqref{eq:Hankel} and $X_{0:T-n}^{*}$ given in \eqref{eq:old_estimator}, 
and it holds that $\tilde{Y}_{0:T-n}^{*} = Q_o^* X_{0:T-n}^{*}$.  
Let two submatrices of $\tilde{Y}_{0:T-n}^{*}$ be
\begin{equation*}
\begin{split}
    \tilde{Y}_{0:T-n}^{*,-} &=[\tilde{y}_{0:n-1}^{*},\tilde{y}_{1:n}^{*},\cdots,\tilde{y}_{T-n-1:T-2}^{*}] \in \mathbb{R}^{n m \times  (T-n)}, \\
    \tilde{Y}_{0:T-n}^{*,+} &=[\tilde{y}_{1:n}^{*},\tilde{y}_{2:n+1}^{*},\cdots,\tilde{y}_{T-n:T-1}^{*}] \in \mathbb{R}^{n m \times  (T-n)},
\end{split}    
\end{equation*}
and they can be rewritten as
\begin{align}
  \tilde{Y}_{0:T-n}^{*,-} = Q_o^* X_{0:T-n-1}^{*}, ~\tilde{Y}_{0:T-n}^{*,+} = Q_o^* WX_{0:T-n-1}^{*}.  \label{eq:yt2}
\end{align}
Taking the singular value decomposition on $\tilde{Y}_{0:T-n}^{*,-}$, we have
\begin{equation} \label{eq:KF_past2}
  \tilde{Y}_{0:T-n}^{*,-} = U\Lambda V^\intercal = U_n\Lambda_n V_n^\intercal,
\end{equation}
where $\Lambda_n = \text{diag}\left(\sigma_1,\sigma_2,\dots,\sigma_{n}\right)$ is a diagonal matrix, having the first $n$ non-zero singular values of $\Lambda$ on its diagonal and satisfying $\sigma_1 \geq \sigma_2 \geq \cdots \geq \sigma_{n}$, 
$U_n$ and $V_n$ contain the $n$ groups of left and right singular vectors corresponding to $\Lambda_n$, respectively. 
The last equality in \eqref{eq:KF_past2} is due to that $\tilde{Y}_{0:T-n}^{*,-}$ has rank equal to $n$. 
Based on the product structure of $\tilde{Y}_{0:T-n}^{*,-}$ in \eqref{eq:yt2}, $Q_o^*$ and $X_{0:T-n-1}^{*}$, up to a similarity transformation, are given by
\begin{equation} \label{eq:svd}
    \hat Q_o^* = U_n\Lambda_n^{\frac{1}{2}},  \quad   \hat X_{0:T-n-1}^{*}= \Lambda_n^{\frac{1}{2}} V_n^\intercal.  
\end{equation}
Finally, based on \eqref{eq:yt2}, an estimate of $W$ up to a similarity transformation is obtained by
\begin{equation}
    Q_W = (\hat Q_o^*)^\dagger \tilde{Y}_{0:T-n}^{*,+} (\hat X_{0:T-n-1}^{*})^\dagger.
\end{equation}

\vspace*{-15pt}

\bibliographystyle{IEEEtran}

\begin{thebibliography}{10}
\providecommand{\url}[1]{#1}
\csname url@samestyle\endcsname
\providecommand{\newblock}{\relax}
\providecommand{\bibinfo}[2]{#2}
\providecommand{\BIBentrySTDinterwordspacing}{\spaceskip=0pt\relax}
\providecommand{\BIBentryALTinterwordstretchfactor}{4}
\providecommand{\BIBentryALTinterwordspacing}{\spaceskip=\fontdimen2\font plus
\BIBentryALTinterwordstretchfactor\fontdimen3\font minus
  \fontdimen4\font\relax}
\providecommand{\BIBforeignlanguage}[2]{{%
\expandafter\ifx\csname l@#1\endcsname\relax
\typeout{** WARNING: IEEEtran.bst: No hyphenation pattern has been}%
\typeout{** loaded for the language `#1'. Using the pattern for}%
\typeout{** the default language instead.}%
\else
\language=\csname l@#1\endcsname
\fi
#2}}
\providecommand{\BIBdecl}{\relax}
\BIBdecl

\bibitem{olfati2007consensus}
R.~Olfati-Saber, J.~A. Fax, and R.~M. Murray, ``Consensus and cooperation in
  networked multi-agent systems,'' \emph{Proceedings of the IEEE}, vol.~95,
  no.~1, pp. 215--233, 2007.

\bibitem{brugere2018network}
I.~Brugere, B.~Gallagher, and T.~Y. Berger-Wolf, ``Network structure inference,
  a survey: Motivations, methods, and applications,'' \emph{ACM Computing
  Surveys (CSUR)}, vol.~51, no.~2, pp. 1--39, 2018.

\bibitem{shvydun2024system}
S.~Shvydun and P.~Van~Mieghem, ``System identification for temporal networks,''
  \emph{IEEE Transactions on Network Science and Engineering}, vol.~11, no.~2,
  pp. 1885--1895, 2024.

\bibitem{hou2021combating}
T.~Hou, T.~Wang, Z.~Lu, and Y.~Liu, ``Combating adversarial network topology
  inference by proactive topology obfuscation,'' \emph{IEEE/ACM Transactions on
  Networking}, vol.~29, no.~6, pp. 2779--2792, 2021.

\bibitem{10210275}
B.~Shan, X.~Yuan, W.~Ni, X.~Wang, R.~P. Liu, and E.~Dutkiewicz, ``Preserving
  the privacy of latent information for graph-structured data,'' \emph{IEEE
  Transactions on Information Forensics and Security}, vol.~18, pp. 5041--5055,
  2023.

\bibitem{9249412}
E.~Testi and A.~Giorgetti, ``Blind wireless network topology inference,''
  \emph{IEEE Transactions on Communications}, vol.~69, no.~2, pp. 1109--1120,
  2021.

\bibitem{sebastian2023learning}
E.~Sebasti{\'a}n, T.~Duong, N.~Atanasov, E.~Montijano, and C.~Sag{\"u}{\'e}s,
  ``Learning to identify graphs from node trajectories in multi-robot
  networks,'' in \emph{2023 International Symposium on Multi-Robot and
  Multi-Agent Systems (MRS)}.\hskip 1em plus 0.5em minus 0.4em\relax IEEE,
  2023, pp. 142--148.

\bibitem{mateos2019connecting}
G.~Mateos, S.~Segarra, A.~G. Marques, and A.~Ribeiro, ``Connecting the dots:
  Identifying network structure via graph signal processing,'' \emph{IEEE
  Signal Processing Magazine}, vol.~36, no.~3, pp. 16--43, 2019.

\bibitem{10146241}
G.~Leus, A.~G. Marques, J.~M. Moura, A.~Ortega, and D.~I. Shuman, ``Graph
  signal processing: History, development, impact, and outlook,'' \emph{IEEE
  Signal Processing Magazine}, vol.~40, no.~4, pp. 49--60, 2023.

\bibitem{9695344}
Z.~Liu, W.~Wang, G.~Ding, Q.~Wu, and X.~Wang, ``Topology sensing of
  non-collaborative wireless networks with conditional {Granger} causality,''
  \emph{IEEE Transactions on Network Science and Engineering}, vol.~9, no.~3,
  pp. 1501--1515, 2022.

\bibitem{li2024topology}
Y.~Li, J.~He, C.~Chen, and X.~Guan, ``Topology inference for network systems:
  Causality perspective and non-asymptotic performance,'' \emph{IEEE
  Transactions on Automatic Control}, vol.~69, no.~6, pp. 3483--3498, 2024.

\bibitem{ioannidis2019semi}
V.~N. Ioannidis, Y.~Shen, and G.~B. Giannakis, ``Semi-blind inference of
  topologies and dynamical processes over dynamic graphs,'' \emph{IEEE
  Transactions on Signal Processing}, vol.~67, no.~9, pp. 2263--2274, 2019.

\bibitem{zaman2021online}
B.~Zaman, L.~M.~L. Ramos, D.~Romero, and B.~{Beferull-Lozano}, ``Online
  topology identification from vector autoregressive time series,'' \emph{IEEE
  Transactions on Signal Processing}, vol.~69, pp. 210--225, 2021.

\bibitem{hendrickx2019identifiability}
J.~M. Hendrickx, M.~Gevers, and A.~S. Bazanella, ``Identifiability of dynamical
  networks with partial node measurements,'' \emph{IEEE Transactions on
  Automatic Control}, vol.~64, no.~6, pp. 2240--2253, 2019.

\bibitem{cheng2022allocation}
X.~Cheng, S.~Shi, and P.~M.~J. {Van den Hof}, ``Allocation of excitation
  signals for generic identifiability of linear dynamic networks,'' \emph{IEEE
  Transactions on Automatic Control}, vol.~67, no.~2, pp. 692--705, 2022.

\bibitem{sun2024identifiability}
W.~Sun, J.~Xu, and J.~Chen, ``Identifiability and identification of switching
  dynamical networks: A data-based approach,'' \emph{IEEE Transactions on
  Control of Network Systems}, vol.~11, no.~3, pp. 1177--1189, 2024.

\bibitem{nabi-abdolyousefi2012networka}
M.~{Nabi-Abdolyousefi} and M.~Mesbahi, ``Network identification via node
  knockout,'' \emph{IEEE Transactions on Automatic Control}, vol.~57, no.~12,
  pp. 3214--3219, 2012.

\bibitem{10337619}
E.~Restrepo, N.~Wang, and D.~V. Dimarogonas, ``Simultaneous topology
  identification and synchronization of directed dynamical networks,''
  \emph{IEEE Transactions on Control of Network Systems}, vol.~11, no.~3, pp.
  1491--1501, 2024.

\bibitem{vanwaarde2021topology}
H.~J. {van Waarde}, P.~Tesi, and M.~K. Camlibel, ``Topology identification of
  heterogeneous networks: Identifiability and reconstruction,''
  \emph{Automatica}, vol. 123, p. 109331, 2021.

\bibitem{mo2017privacy}
Y.~Mo and R.~M. Murray, ``Privacy preserving average consensus,'' \emph{IEEE
  Transactions on Automatic Control}, vol.~62, no.~2, pp. 753--765, 2017.

\bibitem{nozari2017differentially}
E.~Nozari, P.~Tallapragada, and J.~Cort{\'e}s, ``Differentially private average
  consensus: Obstructions, trade-offs, and optimal algorithm design,''
  \emph{Automatica}, vol.~81, pp. 221--231, 2017.

\bibitem{he2020differential}
J.~He, L.~Cai, and X.~Guan, ``Differential private noise adding mechanism and
  its application on consensus algorithm,'' \emph{IEEE Transactions on Signal
  Processing}, vol.~68, pp. 4069--4082, 2020.

\bibitem{10261255}
E.~Rizk, S.~Vlaski, and A.~H. Sayed, ``Enforcing privacy in distributed
  learning with performance guarantees,'' \emph{IEEE Transactions on Signal
  Processing}, vol.~71, pp. 3385--3398, 2023.

\bibitem{wang2021dynamic}
X.~Wang, H.~Ishii, J.~He, and P.~Cheng, ``Dynamic privacy-aware collaborative
  schemes for average computation: A multi-time reporting case,'' \emph{IEEE
  Transactions on Information Forensics and Security}, vol.~16, pp. 3843--3858,
  2021.

\bibitem{9999003}
W.~Zhang, Z.~Zuo, Y.~Wang, and G.~Hu, ``How much noise suffices for privacy of
  multiagent systems?'' \emph{IEEE Transactions on Automatic Control}, vol.~68,
  no.~10, pp. 6051--6066, 2023.

\bibitem{10360869}
Q.~Li, J.~S. Gundersen, M.~Lopuhaä-Zwakenberg, and R.~Heusdens, ``Adaptive
  differentially quantized subspace perturbation {(ADQSP)}: A unified framework
  for privacy-preserving distributed average consensus,'' \emph{IEEE
  Transactions on Information Forensics and Security}, vol.~19, pp. 1780--1793,
  2024.

\bibitem{katewa2020differential}
V.~Katewa, A.~Chakrabortty, and V.~Gupta, ``Differential privacy for network
  identification,'' \emph{IEEE Transactions on Control of Network Systems},
  vol.~7, no.~1, pp. 266--277, 2020.

\bibitem{Hawkins2024node}
C.~Hawkins, B.~Chen, K.~Yazdani, and M.~Hale, ``Node and edge differential
  privacy for graph {Laplacian} spectra: Mechanisms and scaling laws,''
  \emph{IEEE Transactions on Network Science and Engineering}, vol.~11, no.~2,
  pp. 1690--1701, 2024.

\bibitem{li2025Preserving}
Y.~Li, Z.~Wang, J.~He, C.~Chen, and X.~Guan, ``Preserving topology of network
  systems: Metric, analysis, and optimal design,'' \emph{IEEE Transactions on
  Automatic Control}, vol.~70, no.~6, pp. 3540--3555, 2025.

\bibitem{mao2025unidentifiability}
X.~Mao and J.~He, ``Unidentifiability of system dynamics: Conditions and
  controller design,'' \emph{IEEE Transactions on Automatic Control}, vol.~70,
  no.~2, pp. 1380--1387, 2025.

\bibitem{hao2024discernibility}
Y.~Hao, Q.~Wang, Z.~Duan, and G.~Chen, ``Discernibility of topological
  variations for networked {LTI} systems based on observed output
  trajectories,'' \emph{Automatica}, vol. 163, p. 111547, 2024.

\bibitem{fan2024output}
Z.~Fan, X.~Wu, B.~Mao, and J.~L{\"u}, ``Output discernibility of topological
  variations in linear dynamical networks,'' \emph{IEEE Transactions on
  Automatic Control}, vol.~69, no.~8, pp. 5524--5530, 2024.

\bibitem{yushan2025Resistant}
Y.~Li, J.~He, and D.~V. Dimarogonas, ``Resistant topology inference in
  consensus networks: A feedback-based design,'' in \emph{2025 IEEE 64th
  Conference on Decision and Control (CDC)}.\hskip 1em plus 0.5em minus
  0.4em\relax IEEE, 2025, pp. 2563--2568.

\bibitem{FB-LNS}
F.~Bullo, \emph{Lectures on Network Systems}.\hskip 1em plus 0.5em minus
  0.4em\relax Kindle Direct Publishing, Edition 1.6, 2022.

\bibitem{ljung1999system}
L.~Ljung, ``System identification: Theory for the user.''\hskip 1em plus 0.5em
  minus 0.4em\relax Prentice Hall PTR, Second Edition, 1999.

\bibitem{van2023informativity}
H.~J. Van~Waarde, J.~Eising, M.~K. Camlibel, and H.~L. Trentelman, ``The
  informativity approach: To data-driven analysis and control,'' \emph{IEEE
  Control Systems Magazine}, vol.~43, no.~6, pp. 32--66, 2023.

\bibitem{abawajy2016privacy}
J.~H. Abawajy, M.~I.~H. Ninggal, and T.~Herawan, ``Privacy preserving social
  network data publication,'' \emph{IEEE Communications Surveys \& Tutorials},
  vol.~18, no.~3, pp. 1974--1997, 2016.

\bibitem{Wang2024Topology}
Z.~Wang, Y.~Li, X.~Duan, and J.~He, ``Topology-preserving motion coordination
  for multi-robot systems in adversarial environments,'' \emph{IEEE Journal of
  Selected Topics in Signal Processing}, vol.~18, no.~3, pp. 473--486, 2024.

\bibitem{kato2013perturbation}
T.~Kato, \emph{Perturbation theory for linear operators}.\hskip 1em plus 0.5em
  minus 0.4em\relax Springer Science \& Business Media, 2013, vol. 132.

\bibitem{Gohberg2006invariant}
I.~Gohberg, P.~Lancaster, and L.~Rodman, \emph{Invariant subspaces of matrices
  with applications}.\hskip 1em plus 0.5em minus 0.4em\relax SIAM, 2006.

\bibitem{horn2012matrix}
R.~A. Horn and C.~R. Johnson, \emph{Matrix analysis}.\hskip 1em plus 0.5em
  minus 0.4em\relax Cambridge university press, 2012.

\bibitem{7004042}
Y.~Gu, F.~Ren, Y.~Ji, and J.~Li, ``The evolution of sink mobility management in
  wireless sensor networks: A survey,'' \emph{IEEE Communications Surveys \&
  Tutorials}, vol.~18, no.~1, pp. 507--524, 2016.

\bibitem{8892668}
Y.~Khayat, Q.~Shafiee, R.~Heydari, M.~Naderi, T.~Dragičević, J.~W.
  Simpson-Porco, F.~Dörfler, M.~Fathi, F.~Blaabjerg, J.~M. Guerrero, and
  H.~Bevrani, ``On the secondary control architectures of ac microgrids: An
  overview,'' \emph{IEEE Transactions on Power Electronics}, vol.~35, no.~6,
  pp. 6482--6500, 2020.

\bibitem{alonso2019distributed}
J.~Alonso-Mora, E.~Montijano, T.~N{\"a}geli, O.~Hilliges, M.~Schwager, and
  D.~Rus, ``Distributed multi-robot formation control in dynamic
  environments,'' \emph{Autonomous Robots}, vol.~43, no.~5, pp. 1079--1100,
  2019.

\bibitem{cho2001comparison}
G.~E. Cho and C.~D. Meyer, ``Comparison of perturbation bounds for the
  stationary distribution of a {Markov} chain,'' \emph{Linear Algebra and Its
  Applications}, vol. 335, no. 1-3, pp. 137--150, 2001.

\bibitem{hirvonen2020distributed}
J.~Hirvonen and J.~Suomela, ``Distributed algorithms 2020,'' \emph{Aalto
  University, Finland: online}, 2020.

\bibitem{Ho1966effective}
B.~HO and R.~E. K{\'a}lm{\'a}n, ``Effective construction of linear
  state-variable models from input/output functions,''
  \emph{at-Automatisierungstechnik}, vol.~14, no. 1-12, pp. 545--548, 1966.

\bibitem{He2025range}
J.~He, Y.~Xu, Y.~Ju, C.~R. Rojas, and H.~Hjalmarsson, ``Range space or null
  space: Least-squares methods for the realization problem,'' \emph{arXiv
  preprint arXiv:2505.19639}, 2025.

\end{thebibliography}

\end{document}